\newtheorem{theorem}{Theorem}
\newtheorem{definition}{Definition}
\newtheorem{lemma}{Lemma}
\DeclareMathOperator{\BQP}{\mathsf{BQP}}
\DeclareMathOperator{\Tr}{tr}
\providecommand{\calD}{\ensuremath{\mathcal{D}}}
\providecommand{\calF}{\ensuremath{\mathcal{F}}}
\providecommand{\calL}{\ensuremath{\mathcal{L}}}
\providecommand{\calT}{\ensuremath{\mathcal{T}}}
\providecommand{\calX}{\ensuremath{\mathcal{X}}}
\begin{document}

\title{Exponential quantum advantages in learning quantum observables from classical data}

\author{Riccardo Molteni}
\email{r.molteni@liacs.leidenuniv.nl}
\affiliation{{\small applied Quantum algorithms, Leiden University, 2333 CA Leiden, Netherlands}}
\affiliation{LIACS, Universiteit Leiden, 2333 CA Leiden, Netherlands}

\author{Casper Gyurik}
\email{casper.gyurik@pasqal.com}
\affiliation{{\small applied Quantum algorithms, Leiden University, 2333 CA Leiden, Netherlands}}
\affiliation{LIACS, Universiteit Leiden, 2333 CA Leiden, Netherlands}
\affiliation{Pasqal SaS, 7 rue Léonard de Vinci, 91300 Massy, France}

\author{Vedran Dunjko}
\email{v.dunjko@liacs.leidenuniv.nl}
\affiliation{{\small applied Quantum algorithms, Leiden University, 2333 CA Leiden, Netherlands}}
\affiliation{LIACS, Universiteit Leiden, 2333 CA Leiden, Netherlands}

\begin{abstract}
  Quantum computers are believed to bring computational advantages in simulating quantum many body systems. However, recent works have shown that classical machine learning algorithms are able to predict numerous properties of quantum systems with classical data. Despite various examples of learning tasks with provable quantum advantages being proposed, they all involve cryptographic functions and do not represent any physical scenarios encountered in laboratory settings. In this paper we prove quantum advantages for the physically relevant task of learning quantum observables from classical (measured out) data. We consider two types of observables: first we prove a learning advantage for linear combinations of Pauli strings, then we extend the result for a broader case of unitarily parametrized observables. For each type of observable we delineate the boundaries that separate physically relevant tasks which classical computers can solve using data from quantum measurements, from those where a quantum computer is still necessary for data analysis. Differently from previous works, we base our classical hardness results on the weaker assumption that $\BQP$ hard processes cannot be simulated by polynomial-size classical circuits and provide a non-trivial quantum learning algorithm. Our results shed light on the utility of quantum computers for machine learning problems in the domain of quantum many body physics, thereby suggesting new directions where quantum learning improvements may emerge. 
\end{abstract}

\maketitle

\section{Introduction}

The very first proposed application of quantum computers can be traced back to Feynman's idea of simulating quantum physics on a quantum device. Together with factoring~\cite{shor1999polynomial}, simulation 
of quantum many body systems stands as the clearest example of dramatic advantages of quantum computers~\cite{lloyd1996universal}.  Machine learning is another much newer area where quantum computers are believed to possibly bring advantages in certain learning problems, and in fact there are provable speed ups achieved by a quantum algorithm~\cite{liu2021rigorous,servedio2004equivalences,sweke2021quantum,jerbi2021parametrized,anschuetz2024arbitrary} for specific machine learning problems. Relating back to the original Feynman's idea of simulating quantum physics, machine learning problems in quantum many body physics seem a natural scenario where learning advantages could arise. However, perhaps surprisingly, it was recently shown that access to data exemplifying what the hard-to-compute function does can drastically change the hardness of the computational task questioning the role of quantum computation in a machine learning scenarios~\cite{huang2021power,huang2022provably,lewis2024improved}. If every quantum computation could be replicated classically provided access to data, such results would confine the practical application of quantum computers solely to the data acquisition stage.
This is however not the case, as was shown already in the examples considered in~\cite{servedio2004equivalences,liu2021rigorous}. In such cases the unknown function was cryptographic in nature, and not related to genuine quantum simulation problems. Nonetheless,~\cite{gyurik2023exponential} introduced the first methods for establishing learning separations beyond cryptographic tasks. In particular, they demonstrated that learning problems with provable speed-ups can be constructed from any $\BQP$-complete function, thereby enabling stronger links to physical scenarios. However, the work in~\cite{gyurik2023exponential} had two notable shortcomings. First, the classical non-learnability results were based on relatively strong assumptions about distributional problems, which, while plausible, are less well-understood than their decision problem counterparts.
Second, the study did not introduce any significant quantum learning algorithms for general settings. Instead, quantum learnability was only shown in somewhat artificial scenarios where the concept class was small (polynomially bounded), allowing for the application of straightforward brute-force learning methods.\\
The results presented in this work effectively address both issues, enabling the establishment of clear boundaries between classical and quantum learning algorithms when dealing with data generated by quantum processes. 
 Specifically, we consider learning problems where one is interested in predicting expectation values of an unknown observable from measurements on input quantum states which can either be ground states of local Hamiltonians or time evolved states. As motivation for the learning task, we have in mind experimentally plausible settings where learning an unknown observable may arise, such as in phase classification with an unknown order parameter or when dealing with real devices where the implemented measurement may be influenced by noise or external factors. In this scenarios, our result is a proof of learning advantages for different types of quantum observables. The main contributions of this work are as follows:
 \begin{itemize}
     \item We prove an exponential quantum advantage in learning observables that are formed as a linear combination of local Pauli strings acting on time-evolved or ground states of local Hamiltonians. Importantly, by considering more general learning settings and leveraging results from classical learning theory, we base all of our learning advantage results on the more widely studied and generally accepted assumption that $\BQP$-hard processes cannot be simulated by polynomial-size classical circuit (i.e., $\BQP \not\subseteq \mathsf{P/poly}$), rather than on the less well-understood assumptions related to distributional problems.
     \item Assuming $\BQP \not\subseteq \mathsf{P/poly}$, we show how to construct learning problems that provide a quantum learning advantage in the general case of learning unitarily-parametrized observables, whenever an efficient procedure for learning unknown class of unitaries through query access exists. We then provide a concrete example by connecting to recent results on learning shallow unitaries. 
 \end{itemize}
We also examine the Hamiltonian learning problem where the identification of the target function is demonstrated to be classically easy. We summarize the learning settings and the achieved separations in Table \ref{tab:results}, and provide practical examples of where these learning settings may arise in Section~\ref{sec: conclusion}.

\begin{table*}
   \centering
    \renewcommand{\arraystretch}{1.5} 
    \begin{tabular}{|l|c|c|}
       \hline
        \multicolumn{1}{|c|}{\textbf{Class of observables}} & \multicolumn{1}{c|}{\textbf{Learning Problem}} & \multicolumn{1}{c|}{\textbf{\shortstack{ \\Existence of a \\ Learning Advantage}}}  \\
        \hline
        \multirow{3}{*}{\text{Linear combination of Pauli strings}} & Time-Evolution Problem & Yes  \\
        \cline{2-3}
        & Ground State Problem & Yes  \\
        \cline{2-3}
        & Flipped Concepts & No - Classically easy   \\ 
        \hline
        \multirow{3}{*}{\text{Unitarily-parametrized Observables}} & Learning the observable & Yes   \\
        \cline{2-3}
        & Hamiltonian Learning & No - Classically easy  \\
        \cline{2-3}
        
        & Identifying the concept &  
        
        \shortstack{   \\ Unknown - \\ Classical hardness unknown} \\
        \hline
    \end{tabular}
    \caption{Learning problems investigated in this paper.}
    \label{tab:results}
\end{table*}

\section{Related work}\label{section: related works}
Although we are not the first to prove learning advantages for $\BQP$-complete problems, our results present significant differences compared to previous works.

In~\cite{gyurik2023exponential} the authors first suggested that many learning speed-ups could arise by looking at $\BQP$-complete problems. However, there are at least two crucial differences with our work. Firstly, the classical hardness result in~\cite{gyurik2023exponential} is derived by addressing heuristic complexity classes found in distributional problems. Specifically, it is assumed that there exists an input distribution such that a $\BQP$-complete language $\calL$ cannot be heuristically solved by a classical algorithm, i.e., $(\calL,\calD)\not\subseteq\mathsf{HeurP\slash poly}$. In contrast, in this work we employ a different proof technique and base all of our separation results on the assumption that $\BQP\not\subseteq \mathsf{P\slash poly}$, thus focusing on the more widely investigated complexity classes of decision problems.
Secondly, the learning problems considered in~\cite{gyurik2023exponential} have a concept class of polynomial size, and the quantum algorithm crucially leverages this to find the underlying concept in polynomial time by brute force. In this work we take a step further from their result by considering a physically relevant instantiation of a $\BQP$-complete function. The associated concept classes are continuous and the quantum learning algorithm learns the unknown concept from data through a natural machine learning method.

In recent works by \cite{huang2022provably,lewis2024improved}, significant progresses have been made with classical machine learning algorithms with provable guarantees for learning ground state properties from data, restricting the role of a quantum computer only to the data acquisition phase. In this study, we establish rigorous limitations on the capabilities of classical algorithms when learning quantum observables. A key distinction from their approach is that, in considering ground state problems, we investigate Hamiltonians with a polynomially decaying spectral gap, whereas the algorithms in~\cite{huang2022provably,lewis2024improved} have provable guarantees only for gapped Hamiltonians. The works in~\cite{onorati2023efficient,onorati2023provably} significantly extended the set of classical learnable quantum states to states which lies in the same Lindbladian phase. With respect to their works, our findings establish a definitive boundary on what can be learned through classical methods.

 In \cite{yamasaki2023advantage} the authors proposed a family of supervised learning tasks which present a learning speed-up with an exponentially sized concept class.  Our construction, which was developed independently from~\cite{yamasaki2023advantage}, stems from a physically-motivated context, resulting in the utilization of different techniques for the quantum learning algorithm given the necessity to handle errors in the training samples. Furthermore, their complexity results are, once again, derived by focusing on distributional problems rather than decision problems.
 
 Finally, in~\cite{huang2021information}, the authors provide bounds on the overhead in training samples required by a classical algorithm compared to a quantum algorithm for the task of predicting a fixed observable. In contrast, our results focus primarily on the time efficiency of the learning algorithm. Furthermore, the results in~\cite{huang2021information} assume a fixed target distribution, whereas our setting requires learnability across all input distributions.


\section{Preliminaries}\label{sec:preliminaries}
To make our work more self-contained and accessible, we include the necessary preliminaries from learning theory and complexity theory.
\subsection{PAC learning}
The definition of learnability in this work aligns directly with the widely adopted \textit{probably approximately correct} (PAC) learning framework~\cite{kearns1994introduction,mohri2018foundations}. In the case of supervised learning, a learning problem in the PAC framework is defined by a \textit{concept class} $\mathcal{F}$ which, for each input size $n\in\mathbb{N}$, represent a set functions (or \textit{concepts}) defined from some \textit{input space} $\mathcal{X}_n$ (in this paper we assume $\mathcal{X}_n$ to be $\{0,1\}^n$ ) to some \textit{label set} $\mathcal{Y}_n$ (in this paper we assume $\mathcal{Y}_n$ to be the interval $[-1,1]$). The learning algorithm receives as input information on the unknown target concept $f$  through training samples $T=\{(x_\ell, f(x_\ell))\}_\ell$, where $\bm{x} \in \mathcal{X}_n$ is drawn according to a \textit{target distributions} $\mathcal{D}_n$ over the inputs in $\mathcal{X}_n$.
Finally, the goal of the learning algorithm is to output the description of a function (or \textit{hypothesis}) $h$, which is again a function from $\mathcal{X}_n$ to $\mathcal{Y}_n$,  that is in some sense ``close'' to the target concept $f$ which labels the data in $T$.
\begin{definition}[Efficient PAC learnability]
\label{def:pac}
A concept class $\mathcal{F}$ is \textit{efficiently PAC learnable} if there exists a poly$(1/\epsilon,1/\delta,n)$-time algorithm A such that for all $\epsilon\geq 0 $, all $0\leq\delta\leq 1$, and for any target concept $f$ in $\calF$ and any target distribution $\calD_n$ on $X_n$, if $A$ receives in input a training set $\calT=\{(\bm{x}_\ell,f(\bm{x}_\ell))\}_\ell$ of poly$(1/\epsilon,1/\delta,n)$ size, then with probability at least $1-\delta$ over the random samples in $\calT$ and over the internal randomization of $A$, the learning algorithm $A$ outputs a specification of some hypothesis $h(.)=A(T,\epsilon,\delta,.)$ that satisfies
\begin{equation}\label{eq:real pac}
    \mathsf{Pr}_{\bm{x} \sim \mathcal{D}_n}\big|h(\bm{x})- f(\bm{x})\big| \leq \epsilon.
\end{equation}

\end{definition}
In the above definition, learnability is required for any input distribution. While this might seems a strong requirement, it must be the case in order to ensure the applicability of the theory in real scenarios, where input distributions are often unknown or, in any case, not fixed. 

\subsection{Complexity theory}
Our hardness result for the learning tasks discussed in this work is derived based on the complexity-theoretic assumption that $\BQP\not\subseteq \mathsf{P\slash poly}$. We remark here that, in contrast to previous works~\cite{gyurik2023exponential}, we do not consider the heuristic versions of the classes. Rather, we establish our learning separation by considering their simpler exact decision problem versions. In this section we provide the precise definition of this two complexity classes.

\begin{definition}[\textsf{BQP}]\label{def:BQP}
    A language $\calL$ is in \textsf{BQP} if and only if there exists a polynomial-time uniform family of quantum circuits $\{U_n: n \in \mathbb{N}\}$, such that
\begin{enumerate}
    \item For all $n \in \mathbb{N}$, $U_n$ takes as input an $n$-qubit computational basis state, and outputs one bit obtained by measuring the first qubit in the computational basis.
    \item For all $\bm{x} \in \calL$, the probability that the output of $U_{|x|}$ applied on the input $\bm{x}$ is $1$ is greater or equal to $2/3$.
    \item For all $\bm{x} \notin \calL$, the probability that the output of $U_{|\bm{x}|}$ applied on the input $\bm{x}$ is $0$ is greater or equal to $2/3$.
\end{enumerate}
\end{definition}
The class $\mathsf{P\slash poly}$, instead, captures the class of decision problems solvable by a polynomial time deterministic algorithm equipped with an ``advice'' bitstring. Importantly, the advice depends only on the input size, but it must be otherwise the same for every input $\bm{x}$ of a given size.

\begin{definition}[Polynomial advice~\cite{arora2009computational}]
\label{def:p/poly}
A problem $\calL: \{0,1\}^* \rightarrow \{0,1\}$ is in $\mathsf{P/poly}$ if there exists a polynomial-time classical algorithm $\mathcal{A}$ with the following property: for every $n$ there exists an advice bitstring $\alpha_n \in \{0,1\}^{\mathrm{poly}(n)}$ such that for all $\bm{x} \in \{0,1\}^n$:
\begin{align}
\label{eq:advice}
     \mathcal{A}(\bm{x}, \alpha_n) = \calL(x).
\end{align}

\end{definition}
Equivalently, the class $\mathsf{P\slash poly}$ can be thought as the class of decision problem solvable by a \textit{non-uniform} family of polynomial-size Boolean circuit. That is, for each input size, there exists an efficient circuit that correctly decides the inputs, but the circuits may differ completely for each input size $n$. Finally let us comment on the assumption $\BQP\not\subseteq \mathsf{P\slash poly}$. Although the class $\mathsf{P\slash poly}$ is regarded as very powerful (for instance, it contains undecidable unary languages), it is generally believed that $\BQP$ is not entirely contained within it. For example, if $\BQP\subseteq \mathsf{P\slash poly}$, then factoring and the discrete logarithm problem~\cite{shor1999polynomial} would also be solvable by non-uniform polynomial size classical circuits, which would compromise much of modern cryptography under standard security assumptions~\cite{blum1984generate,gyurik2023exponential}. Another argument against $\BQP\subseteq \mathsf{P\slash poly}$ arises when considering the corresponding sampling problem. If quantum sampling could be done in $\mathsf{SampBPP\slash poly}$, then the polynomial hierarchy would collapse at the fourth level, as discussed in~\cite{aaronson2011computational,gyurik2023exponential,Slack}. Finally, while Adleman's trick allows classical randomness to be simulated with random strings, no analogous method exists for quantum algorithms. In particular, there is no known way to ``extract the quantumness''~\cite{Slack} from a quantum algorithm, leaving the nature of polynomial advice for simulating quantum computers entirely unclear.

\section{The learning problems}\label{section: pauli obs}
We now consider the first class of learning problems addressed in this paper, where the unknown observable is a linear combination of local Pauli strings. Within this framework, we define the following three abstract learning scenarios. In the first case, the input states come from an arbitrary distribution and are time-evolved under a fixed local Hamiltonian before being measured by the (partially) unknown target observable. In the second scenario, the inputs are known local Hamiltonians that time-evolve a fixed initial state before being measured. In the third case, we again consider different Hamiltonians as inputs, but here the learning problem involves learning measurements on their ground states, rather than on time-evolved states. We anticipate that these abstract scenarios can represent a variety of realistic settings where noise or external factors constrain control over the input Hamiltonians and the implemented measurements, we discuss more on this in Section~\ref{sec: conclusion}.

For all the three scenarios, we rigorously model the learning tasks using concept classes within the PAC learning framework introduced in Section~\ref{sec:preliminaries}. We now analyze the first scenario, where the input states are time-evolved under a fixed Hamiltonian for a constant time, however, the input state preparation is not fully controlled. We model this by allowing the input states to be sampled randomly from an unknown distribution of quantum states,  subsequently they undergo a time-evolution before being measured by an unknown observable. After collecting the corresponding measurement outcomes, the goal is to predict the expectation values of the unknown observable on new input states. Let $H$ be a Hamiltonian and fix a constant time $\tau$, we model the \textit{time-evolution learning problem} by the following concept class

\begin{equation}\label{eq:concept class}
   \mathcal{F}_{\mathrm{evolved}}^{H,O}=\{f^{\bm{{\bm{\alpha}}}} (\bm{x})\in \mathbb{R} \;\; | \; \mathbf{{\bm{\alpha}}}\in[-1,1]^m\}
   \end{equation}
 \begin{align*}
     \text{with: } \;\; &f^{\bm{\alpha}}(\bm{x}): \;\;\;\bm{x}\in \{0,1\}^n \rightarrow f^{\bm{\alpha}}(\bm{x})=\text{Tr}[\rho_H(\bm{x}) O(\mathbf{{\bm{\alpha}}})] \\ &O({\bm{\alpha}})=\sum_{i=1}^m {\bm{\alpha}}_i P_i.
\end{align*}
In the above, $\bm{x}$ specifies the initial state $\ket{\bm{x}}$, $\rho_H(\bm{x})$ the constant time evolved state $\rho_H(\bm{x})=U\ket{\bm{x}}\bra{\bm{x}}U^\dagger$ with $U=e^{iH\tau}$ and each $P_i$ is a $k$-local Pauli string where $m$ scales polynomially with $n$. Notice that although we consider binary inputs $\bm{x}\in\{0,1\}^n$, the output of each concept is real-valued. The goal of the ML learning algorithm is to learn a model $h(\bm{x})$ which approximates the unknown concept $f^{\bm{\alpha}}(\bm{x})=\text{Tr}[\rho_H(\bm{x}) O(\mathbf{{\bm{\alpha}}})]$, using as training samples data of the form $\mathcal{T}^\mathbf{{\bm{\alpha}}}_{\epsilon_2} = \{(\bm{x}_\ell, y_\ell)\}_{\ell=1}^N$ where $ y_\ell\approx_{\epsilon_2} f^{\bm{\alpha}}(\bm{x}_\ell)$ is an additive error $\epsilon_2$-approximation of the true expectation value $\text{Tr}[\rho_H(\bm{x}_\ell)O({\bm{\alpha}})]$. Considering datasets with approximated values makes the learning problem closer to real world scenarios. In an idealized setting, the dataset would consist of pairs ${(\bm{x}_\ell,y_\ell)}$, where $y_\ell$ represents the exact expectation value. However, in real experiments, the estimations of these real-valued quantities are obtained by a finite number of state copies, resulting in only approximate estimations of expectation values due to sampling errors.
Formally, we assume a maximum (sampling) error $\epsilon_2$ on the training labels $y_{\ell}$ of our dataset, i.e. $\epsilon_2=\max_\ell|\text{Tr}[\rho_H(\bm{x}_\ell)O({\bm{\alpha}})]-y_\ell|$. It is now possible to formally state our learning condition. Assuming the $\bm{x}_\ell$'s in the training data come from an unknown distribution $\mathcal{D}$, we require that the ML model learns the concept class $\mathcal{F}^{H,O}_{\mathrm{evolved}}$ in the following sense~\footnote{It is important to note that, in contrast to the learning condition in~\cite{gyurik2023exponential}, here we require learnability under any distribution.}.
\begin{definition}[Efficient learning condition]\label{def:learning} A concept class $\mathcal{F}$ is efficiently learnable if there exists a poly$(1/\epsilon,1/\delta,1/\epsilon_2,n)$-time algorithm $A$ such that for all $\epsilon,\epsilon_2>0$ and all $0<\delta<1$, and for any $f^{\bm{\alpha}}$ in $\mathcal{F}$ and any input target distribution $\calD$, if $A$ receives in input a training set $\mathcal{T}^{\bm{\alpha}}_{\epsilon_2}$ of poly$(1/\epsilon,1/\delta,1/\epsilon_2,n)$ size,  $h(.)=A(\mathcal{T}_{\epsilon_2}^{\bm{\alpha}},\epsilon,\delta,.)$ satisfies with probability $1-\delta$:
    \begin{equation}\label{eq:learning}
    \mathbb{E}_{\bm{x}\sim \mathcal{D}}\big[\; |f^{\bm{\alpha}}(\bm{x}) - h(\bm{x})|^2\; \big] \leq \epsilon 
    \end{equation}
\end{definition}
Notice that our learning condition directly follows from the definition of PAC learnability in Def.~\ref{def:pac}, with the only difference being that it is modified to account for errors in the training data, as discussed above.

In our learning task, the learning algorithm only knows the observable through its functional form, mapping $\alpha$ to $O(\alpha)$. Any additional information regarding $O({\bm{\alpha}})$ can exclusively be derived from the training samples within $\mathcal{T}^{\bm{\alpha}}_{\epsilon_2}$.
In particular the vector ${\bm{\alpha}}$, which defines the specific concept in the concept class, is unknown to the learning algorithm. While the classical hardness of the learning problem relies on considering a specific ${\bm{\alpha}}$ for which we prove the evaluation of $f^{\bm{\alpha}}(\bm{x})$ to be hard, the quantum algorithm will use a LASSO regression to infer a parameter $w\in[-1,1]^m$ close to the target ${\bm{\alpha}}$ so that condition (\ref{eq:learning}) is satisfied.

We emphasize here that our learning definition demands that the trained classical model can label new points. This stands in contrast to other settings, e.g. Hamiltonian learning problems, where the task would be identifying the vector ${\bm{\alpha}}$. In the prior case, a learning advantage is more easily established as explained in \cite{gyurik2023exponential} where the hardness of \textit{evaluating} versus \textit{identifying} a concept was discussed. In Section~\ref{sec: relationHL}, we will explore this difference further and present an example of an identification problem closely related to the concept class of Eq.~(\ref{eq:concept class}), which indeed can be solved by a classical algorithm.

We can now state the first result of this paper, namely the existence of a concept class for the time-evolution learning problem learnable by a quantum algorithm but for which no classical algorithm can meet the learning condition of Eq. (\ref{eq:learning}). 

\begin{theorem}[Learning advantage for the time-evolution problem]\label{thm: separation}
 For any $\BQP$-complete language there exists a Hamiltonian $H_{\mathrm{hard}}$ and a set of observables $\{O({\bm{\alpha}})\}_{\bm{\alpha}}$ such that no classical algorithm can efficiently solve the time-evolution learning problem, formalized by the concept class $\mathcal{F}_{\mathrm{evolved}}^{H_{\mathrm{hard}},O}$, in the sense of Def.~\ref{def:learning}, unless $\BQP\subseteq\mathsf{P\slash poly}$. However, there exists a quantum algorithm which learns $\mathcal{F}_{\mathrm{evolved}}^{H_{\mathrm{hard}},O}$ under any input distribution $\mathcal{D}$. 
\end{theorem}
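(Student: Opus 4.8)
The plan is to prove the two halves separately: a classical no-go by reducing the $\BQP$-complete language $\calL$ to the learning problem, and a quantum upper bound by recognizing the concept class as linear regression in a quantum feature space.

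For the hardness encoding, I would start from the uniform circuit family $\{V_n\}$ deciding $\calL$ and realize each $V_n$ as the constant-time evolution of a geometrically local Hamiltonian. By a standard circuit-to-Hamiltonian (clock) construction one obtains a $k$-local $H_{\mathrm{hard}}=H_n$, acting on the $n$ input qubits together with $\poly(n)$ ancilla/clock qubits, such that $U=e^{iH_n\tau}$ carries out the computation in the fixed time $\tau$; since $\tau$ is constant this forces $\|H_n\|=\poly(n)$, which is compatible with $k$-locality (polynomially many terms). I would then choose the observable so that the distinguished concept $\bm{\alpha}^{\ast}$ reads out the answer: take $O(\bm{\alpha}^{\ast})=\tfrac12(\mathbb{I}-Z_{\mathrm{out}})$, a combination of only two $k$-local Pauli strings, so that $f^{\bm{\alpha}^{\ast}}(\bm{x})=\mathrm{Tr}[\rho_{H_n}(\bm{x})O(\bm{\alpha}^{\ast})]$ is exactly the acceptance probability of $V_n$. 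The $\BQP$ promise then gives $f^{\bm{\alpha}^{\ast}}(\bm{x})\geq 2/3$ for $\bm{x}\in\calL$ and $f^{\bm{\alpha}^{\ast}}(\bm{x})\leq 1/3$ otherwise.

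The hardness argument itself exploits that Def.~\ref{def:learning} demands success under \emph{every} distribution, point masses included. Fix $\epsilon$ with $\sqrt{\epsilon}<1/6$. For each $\mathcal{D}$ the learner outputs, with nonzero probability and hence existentially, a poly-size hypothesis $h_{\mathcal{D}}$ with $\mathbb{E}_{\mathcal{D}}|f^{\bm{\alpha}^{\ast}}-h_{\mathcal{D}}|\leq\sqrt{\epsilon}$ (Cauchy--Schwarz); since $\mathsf{P/poly}$ is non-uniform, mere existence of a good output suffices. Let $\mathcal{H}$ be the convex hull of the at most $2^{\poly(n)}$ circuits the learner can emit at size $n$: the map $(h,\mathcal{D})\mapsto\mathbb{E}_{\mathcal{D}}|f^{\bm{\alpha}^{\ast}}-h|$ is convex in $h$ and linear in $\mathcal{D}$ over this compact convex set, so Sion's minimax theorem yields $\min_{h\in\mathcal{H}}\max_{\mathcal{D}}\mathbb{E}_{\mathcal{D}}|f^{\bm{\alpha}^{\ast}}-h|=\max_{\mathcal{D}}\min_{h\in\mathcal{H}}\mathbb{E}_{\mathcal{D}}|f^{\bm{\alpha}^{\ast}}-h|\leq\sqrt{\epsilon}$. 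The inner maximum over $\mathcal{D}$ is attained at a point mass, so the minimizer $h^{\ast}$ obeys $\|f^{\bm{\alpha}^{\ast}}-h^{\ast}\|_{\infty}\leq\sqrt{\epsilon}<1/6$ on \emph{all} $\bm{x}$; by a Carath\'eodory argument $h^{\ast}$ is a mixture of $\poly(n)$ poly-size circuits, hence a single poly-time-evaluable object, and thresholding it at $1/2$ decides $\calL$ everywhere. This places $\calL\in\mathsf{P/poly}$, contradicting $\BQP\not\subseteq\mathsf{P/poly}$. Obtaining this \emph{worst-case} conclusion, rather than the merely heuristic one of prior work, is precisely what the all-distributions requirement buys and is the conceptual heart of the argument.

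For the quantum upper bound I would use that $f^{\bm{\alpha}}$ is \emph{linear} in the quantum feature map $\phi(\bm{x})=(\mathrm{Tr}[\rho_{H}(\bm{x})P_i])_{i=1}^{m}\in[-1,1]^{m}$. A quantum computer prepares $\rho_H(\bm{x})=e^{iH\tau}\ket{\bm{x}}\bra{\bm{x}}e^{-iH\tau}$ by Hamiltonian simulation and estimates each $k$-local entry to precision $\epsilon_1$ with $O(1/\epsilon_1^2)$ shots, producing $\hat\phi$. The learner then solves the $\ell_1$-constrained least squares $\bm{w}=\argmin_{\|\bm{w}\|_1\leq m}\sum_\ell\big(y_\ell-\bm{w}\cdot\hat\phi(\bm{x}_\ell)\big)^2$ and returns $h(\bm{x})=\bm{w}\cdot\hat\phi(\bm{x})$, evaluating $\hat\phi(\bm{x})$ on the device at test time. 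Generalization is distribution-free: the Rademacher complexity of $\{\bm{x}\mapsto\bm{w}\cdot\phi(\bm{x}):\|\bm{w}\|_1\leq m,\ \|\phi\|_\infty\leq1\}$ is $O\!\big(m\sqrt{\log m/N}\big)$, so $N=\poly(n,1/\epsilon,\log(1/\delta))$ samples drive the population $\ell_2$ error below $\epsilon$ under \emph{any} $\mathcal{D}$; the feature-estimation error contributes at most $\|\bm{w}\|_1\epsilon_1\leq m\epsilon_1$ and the label noise $\epsilon_2$ enters additively, both absorbed by choosing $\epsilon_1,\epsilon_2$ inverse-polynomially small, and the target $\bm{\alpha}$ lies in the feasible $\ell_1$ ball so the program is realizable. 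I expect the main obstacle to be the constant-time circuit-to-Hamiltonian encoding: producing a genuinely $k$-local $H_{\mathrm{hard}}$ whose fixed-time evolution runs $V_n$ while keeping the read-out a combination of local Paulis, so that the clock/ancilla bookkeeping never forces a nonlocal measurement; a secondary subtlety is the minimax step, where one must secure compactness and bound the support of the optimal mixture so that $h^{\ast}$ remains poly-size.
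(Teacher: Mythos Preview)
Your quantum learnability argument matches the paper's essentially line for line: prepare $\rho_H(\bm{x})$ by Hamiltonian simulation, estimate the feature vector $\phi(\bm{x})=(\mathrm{Tr}[\rho_H(\bm{x})P_i])_i$, and run $\ell_1$-constrained least squares with $\|\bm w\|_1\le B=\poly(n)$; the generalization bound is the same $\ell_1/\ell_\infty$ Rademacher/LASSO bound the paper quotes from~\cite{mohri2018foundations}.

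For classical hardness you take a genuinely different route. The paper invokes Schapire's boosting theorem~\cite{schapire1990strength} as a black box: distribution-free PAC learnability of $\calF$ implies every concept in $\calF$ is computed exactly by a poly-size circuit, and since the distinguished concept decides the $\BQP$-complete language $\calL$, one gets $\calL\in\mathsf{P/poly}$. You instead rederive this implication from first principles via Sion's minimax theorem, which is a legitimate and in spirit equivalent approach (boosting is itself a minimax construction). However, your sparsification step is not correct as written: Carath\'eodory in the function space $\mathbb{R}^{\{0,1\}^n}$ represents $h^*$ as a combination of up to $2^n+1$ base hypotheses, not $\poly(n)$. The fix you need (and seem to anticipate when flagging the ``support of the optimal mixture'' as a subtlety) is probabilistic: sample $k=O(n/\epsilon')$ circuits i.i.d.\ from the mixture defining $h^*$ and average; Hoeffding plus a union bound over the $2^n$ inputs gives $\|\hat h-h^*\|_\infty\le\epsilon'$ with positive probability, and the average of $k$ poly-size circuits is poly-size. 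With that repair your argument goes through. The paper's route is more economical (one citation); yours is more self-contained.

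The circuit-to-Hamiltonian step you flag as the ``main obstacle'' is handled in the paper (Lemma~\ref{lemma: constant time ev}) by the modified Feynman clock Hamiltonian of~\cite{childs2004quantum}: reweighting the hopping terms by $\sqrt{j(k+1-j)}$ makes the clock dynamics isomorphic to an angular-momentum $J_x$ rotation, yielding \emph{perfect} state transfer from clock $0$ to clock $k$ at $t=\pi$ with a $4$-local Hamiltonian of $\poly(n)$ norm, and the read-out is simply $Z$ on the output qubit, so your locality concerns do not materialize.
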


In the remainder of this Section we prove Theorem \ref{thm: separation} by explicitly constructing an example of a provable classically hard time-evolution problem and providing a quantum algorithm with learning guarantees.
We sketch the proof ideas in the following paragraphs of the main text, while the full proofs can be found in the Appendix~\ref{app: classical hardness} and~\ref{app: quantum learnability}.
\\

\subsection{Classical hardness}\label{sec:cl hardness}
Our classical hardness results are based on the assumption $\BQP\not\subseteq\mathsf{P\slash poly}$. As previously stated, this is different from previous works~\cite{gyurik2023exponential}, where classical intractability was established by considering the distributional version of complexity classes. Specifically, the results in~\cite{gyurik2023exponential} explicitly assume the existence of an input distribution $\calD_{\mathrm{hard}}$ for which no classical algorithm can even approximate, in the sense of Eq.~\ref{eq:real pac}, a target function that computes $\BQP$-complete languages. In contrast, this work considers the weaker assumption that $\BQP$-hard languages cannot be correctly decided on every input by polynomial-sized classical circuits. While the assumption considered in~\cite{gyurik2023exponential} implies the assumption considered in this paper, the reverse is not generally true.  
To achieve classical hardness through the more natural and studied assumption $\BQP\not\subseteq\mathsf{P\slash poly}$, we require a stronger but arguably more natural learning condition in Def.~\ref{def:learning} than what present in~\cite{gyurik2023exponential}. Namely, we work in the distribution free PAC framework and require the learning algorithm to succeed for any input distribution. In this way, we can use well-known results from classical learning theory that link the hardness of learning to the hardness of computing a target function. The full proof of classical hardness, as well as quantum learnability, for the learning task can be found in Appendix~\ref{app: classical hardness} and Appendix~\ref{app: quantum learnability} respectively. To prove classical hardness is enough to show that there exists a local Hamiltonian $H_{\mathrm{hard}}$ for which the concepts considered in our learning problem can decide $\BQP$ languages. In fact, by the results in~\cite{schapire1990strength} from classical learning theory, if a target concept is learnable in the sense of Def.~\ref{def:pac}, then there exists a polynomial size classical circuit which efficiently evaluates it on every input point. In particular, this implies that the concept can be computed in $\mathsf{P\slash poly}$.
Therefore, if the concept class $\calF_{\mathrm{evolved}}^{H_{\mathrm{hard}},O}$ is classically learnable, then $\BQP\subseteq\mathsf{P\slash poly}$. As a final note, we observe that although our learning problem is defined over real-valued functions, in our hardness result, the specific Hamiltonian $H_{\mathrm{hard}}$ and the associated observables ensure that the target concept has a binary output. 

\begin{restatable}[Classical hardness of the time-evolution learning problem]{theorem}{hardness} \label{lemma: cl hardness}
For any $\BQP$-complete language, there exists a Hamiltonian $H_{\mathrm{hard}}$ such that no randomized polynomial-time classical algorithm $A_c$ satisfies the learning condition of Def.~\ref{def:learning} for the concept class $\mathcal{F}^{H_{\mathrm{hard}},O}_{\mathrm{evolved}}$, unless $\BQP \subseteq \mathsf{P/poly}$.
\end{restatable}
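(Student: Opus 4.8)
The plan is to argue by contraposition: I will show that if the concept class $\calF^{H_{\mathrm{hard}},O}_{\mathrm{evolved}}$ were efficiently learnable in the sense of Def.~\ref{def:learning}, then the fixed $\BQP$-complete language $\calL$ would be decidable by polynomial-size classical circuits, i.e.\ $\BQP\subseteq\mathsf{P/poly}$. The difficulty splits into two independent pieces: (i) a black-box reduction turning any efficient learner for a binary-valued concept into a polynomial-size circuit that computes that concept on \emph{every} input, and (ii) an explicit construction of a local Hamiltonian $H_{\mathrm{hard}}$, a constant time $\tau$, and a parameter vector $\bm{\alpha}^\ast\in[-1,1]^m$ such that the single concept $f^{\bm{\alpha}^\ast}(\bm{x})=\Tr[\rho_{H_{\mathrm{hard}}}(\bm{x})\,O(\bm{\alpha}^\ast)]$ equals (a fixed affine rescaling of) the indicator $\calL(\bm{x})$. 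Piece (i) combined with piece (ii) closes the argument, since $f^{\bm{\alpha}^\ast}\in\mathsf{P/poly}$ forces $\calL\in\mathsf{P/poly}$ and hence $\BQP\subseteq\mathsf{P/poly}$ by $\BQP$-completeness.

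For piece (i) I would invoke the classical learning-theoretic result of \cite{schapire1990strength} cited above. Because the hard concept is binary-valued, thresholding a real hypothesis $h$ at $1/2$ turns the squared-loss guarantee of Def.~\ref{def:learning} into a $0/1$-loss guarantee: whenever $|f^{\bm{\alpha}^\ast}(\bm{x})-h(\bm{x})|^2<1/4$ the rounded hypothesis agrees with $f^{\bm{\alpha}^\ast}$, so the concept is weakly learnable under \emph{every} distribution. Boosting then composes polynomially many weak hypotheses by a majority vote into a strong learner, and running this against the \emph{uniform} distribution on $\{0,1\}^n$ with target error below $2^{-n}$ drives the number of mislabeled inputs below one, hence to zero. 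The resulting polynomial-size majority circuit computes $f^{\bm{\alpha}^\ast}$ exactly, and hardwiring one successful run as non-uniform advice places $f^{\bm{\alpha}^\ast}$ in $\mathsf{P/poly}$; crucially it is boosting, not direct strong learning, that keeps the circuit size polynomial while pushing the error exponentially small. The training-label error $\epsilon_2$ is harmless here, as it can be fixed to a small constant since $f^{\bm{\alpha}^\ast}$ is bounded away from $1/2$ by the $\BQP$ promise gap.

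Piece (ii) is the main obstacle and carries the physical content. Fix a polynomial-time uniform family $\{W_n\}$ of quantum circuits deciding $\calL$, with the verdict on a designated output qubit. I would encode $W_n$ into a \emph{local} Hamiltonian through a Feynman--Kitaev history-state (clock) construction acting on the computational register together with an ancilla clock register initialized in a fixed state, so that the continuous-time dynamics propagates the computation through the gate sequence. The tension with the requirement that $\tau$ be constant is resolved by rescaling: multiplying all local terms by a polynomial factor turns evolution for constant $\tau$ into effective evolution for time $\mathrm{poly}(n)$, enough for the computational wavefront to reach the final clock value while $H_{\mathrm{hard}}$ stays a sum of $k$-local terms (the enlarged Lieb--Robinson velocity likewise permits a poly-size light cone). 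The readout observable $O(\bm{\alpha}^\ast)$ is chosen to project onto the final clock slice and measure the output qubit; expanding this projector-times-Pauli into the Pauli basis yields a linear combination of polynomially many $k$-local strings of the allowed form $\sum_i\alpha_i P_i$, which fixes $\bm{\alpha}^\ast$ and bounds $m$ by $\mathrm{poly}(n)$.

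The technical heart, and where I expect the real work, is ensuring that the resulting expectation value faithfully encodes the decision with a usable gap and lands in $[-1,1]$ after the fixed affine rescaling. Since the history state carries only an $O(1/\sqrt{T})$ fraction of its weight on the final clock slice at any instant, a naive readout would shrink the signal; I would therefore either shape the clock propagation so that a constant fraction of weight sits at the final time, or first amplify the circuit's acceptance probability to be exponentially close to $0$ or $1$, so that conditioning on the clock makes the concept value equal the binary indicator $\calL(\bm{x})$ up to exponentially small error — matching the paper's remark that $H_{\mathrm{hard}}$ and its observables render the target concept effectively binary. Verifying $k$-locality of every term, the polynomial bound on $m$, and the survival of the promise gap through both the rescaling and the Pauli expansion is the delicate step; once it is in place, pieces (i) and (ii) together establish the theorem.
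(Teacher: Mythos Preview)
Your two-piece decomposition matches the paper exactly: piece~(i) is precisely the paper's invocation of Schapire's boosting result (their Lemma~\ref{lemma: shapire}), and piece~(ii) is the circuit-to-Hamiltonian encoding.

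The substantive difference is in how piece~(ii) is executed. You work with the generic Feynman clock Hamiltonian and then confront the resulting signal-dilution problem (only $O(1/\sqrt{T})$ weight on the final clock slice), proposing to fix it either by unspecified ``shaping'' of the propagation or by amplification of the circuit. The paper instead uses the construction of~\cite{childs2004quantum} (their Lemma~\ref{lemma: constant time ev}): reweighting the hopping terms by $\sqrt{j(k+1-j)}$ turns the clock dynamics into the $J_x$ rotation of a spin-$k/2$ system, which achieves \emph{perfect} state transfer from $\ket{\psi}\ket{0}$ to $U\ket{\psi}\ket{k}$ at time $t=\pi$. This eliminates the problem you identify as the ``technical heart'' entirely: since all the amplitude sits on the final clock value, the readout observable is simply $O'=Z\otimes I^{\otimes(n-1)}$ on the work register, with no clock projector, no Pauli expansion of a projector, and no loss of the $\pm 1/3$ promise gap. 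Your route can be made to work (e.g.\ padding the circuit with identities so that most clock slices carry the completed computation), but the paper's perfect-state-transfer trick is both shorter and removes the delicate verification steps you flag at the end.

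One small correction: the hard concept in the paper is not binary nor an affine rescaling of $\calL(\bm x)$; it is a real number satisfying $f^{\bm\alpha'}(\bm x)\geq 1/3$ or $\leq -1/3$ according to membership, so the threshold is at $0$ rather than $1/2$. This does not affect the Schapire step, since the sign of any hypothesis within $1/3$ of the concept still decides $\calL$.
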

\begin{proof}[Proof sketch]
The detailed proof of Theorem~\ref{lemma: cl hardness} can be found in the Appendix~\ref{app: classical hardness}, here we give the overall proof strategy.
Let $\mathcal{L}$ be an arbitrary $\BQP$ language. Since $\mathcal{L}\in \BQP$ there exists a quantum circuit $U_{BQP}$ which decides input bitstrings $\bm{x}\in \{0,1\}^n$ correctly on average with respect to $\mathcal{D}$. As shown more rigorously in the Appendix~\ref{app: classical hardness}, measuring the $Z$ operator on the first qubit of the state $U_{BQP}\ket{\bm{x}}$ will output a positive or negative value depending if $\bm{x}\in\mathcal{L}$ or not. Therefore, for the observable $O'=Z\otimes I\otimes...\otimes I$, the quantum model $f^{O'}(\bm{x})=\text{Tr}[O'\rho_{U_{BQP}}(\bm{x})]$, with $\rho_{U_{BQP}}(\bm{x})=U_{BQP}\ket{\bm{x}}\bra{\bm{x}}U^\dagger_{BQP}$, correctly decides every input $\bm{x}$. Finally, using Feynman's idea~\cite{feynman1985quantum,nagaj2010fast} it is possible to construct a local Hamiltonian which time-evolves the initial state $\ket{\bm{x}}$ into $U_{BQP}\ket{\bm{x}}$ in constant time~\cite{childs2004quantum}. Denote this constructed Hamiltonian as $H_{\mathrm{hard}}$. Then the concept $\bm{\alpha'}$, associated to the observable $O({\bm{\alpha}}')=O'$, implements a $\BQP$ computation. The final step is a result in~\cite{schapire1990strength} which guarantees that if a function is learnable under any distribution in the sense of Def.~\ref{def:pac}, then there exists a polynomial size circuit which correctly evaluates it on every input $\bm{x}$. We provide a more detailed explanation of the crucial result contained in~\cite{schapire1990strength} and its consequences in Appendix~\ref{app: classical hardness}. This concludes the proof, as if $\calF^{H_{\mathrm{hard}},O}_{\mathrm{evolved}}$ was learnable, then there would be a polynomial size circuit which evaluates the concept $f^{\alpha'}$. Thus any $\BQP$ language could be decided in $\mathsf{P\slash poly}$.
\end{proof}

\subsection{Quantum learnability}
 To establish the quantum learnability of the classically hard concept class constructed above for the time-evolution problem, we present a quantum algorithm directly. This algorithm accurately predicts any concept in the associated learning problem with high probability when provided with a sufficient number of training data samples. Crucially, we prove that the algorithm meets the learning condition of Def. \ref{def:learning} using only polynomial training samples and running in polynomial time. The central idea involves leveraging the capability of the quantum algorithm to efficiently prepare the time-evolved  states $\rho_{H(\bm{x})}$, for an input local Hamiltonians $H$ and, in particular, for the hard instances of $H_{\mathrm{hard}}$ considered in our hardness result of Lemma \ref{lemma: cl hardness}.

\begin{theorem}[Quantum learnability of the time-evolution learning problem]\label{lemma: q easiness}
There exist an efficient quantum algorithm $A_q$ such that for any concept $f^{\bm{\alpha}}\in \mathcal{F}^{H_{\mathrm{hard}},O}_{\mathrm{evolved}}$ considered in Lemma \ref{lemma: cl hardness}, $A_q$ satisfies the following. 
Given $n,\epsilon,\delta\geq 0$, and any training dataset $T^{{\bm{\alpha}}}_{\epsilon_2}$, with $\epsilon_2\leq\epsilon$, of size 
     \begin{equation}\label{eq:bound}
         N= \mathcal{O}\left(\frac{\log(\mathrm{poly}(n)/\delta)\mathrm{poly}(n)}{\epsilon^2}\right)
     \end{equation}
     with probability $1-\delta$ the quantum algorithm $A_q$ outputs a model $h^*(.)=A(T^{{\bm{\alpha}}}_{\epsilon_2},\epsilon,\delta,.)$ which satisfies the learning condition of Def.~\ref{def:learning}:
     \begin{equation}\mathbb{E}_{\mathbf{x} \sim \mathcal{D}}\big[\; |f^{{\bm{\alpha}}}(\mathbf{x}) - h^*(\bm{x})|^2\; \big] \leq \epsilon 
       \end{equation} 
       where $\mathcal{D}$ is any arbitrary distribution from which the training data is sampled.
\end{theorem}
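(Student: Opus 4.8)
The plan is to exploit that every concept is \emph{linear} in the unknown parameter $\bm{\alpha}$, reducing the task to $\ell_1$-bounded linear regression over a quantum feature map that the algorithm can evaluate efficiently. Define $\phi(\bm{x}) = (\phi_1(\bm{x}),\dots,\phi_m(\bm{x}))$ with $\phi_i(\bm{x}) = \Tr[\rho_H(\bm{x}) P_i]$, so that $f^{\bm{\alpha}}(\bm{x}) = \langle \bm{\alpha}, \phi(\bm{x})\rangle$. Each feature satisfies $|\phi_i(\bm{x})| \le 1$, and the target obeys $\|\bm{\alpha}\|_1 \le m = \poly(n)$ since $\bm{\alpha} \in [-1,1]^m$. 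First I would thus view the problem as recovering a weight vector in the $\ell_1$-ball $\{w : \|w\|_1 \le m\}$ that fits the data under the quadratic loss.

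The quantum computer supplies the feature map. For $H_{\mathrm{hard}}$ of Lemma~\ref{lemma: cl hardness}, the state $\rho_H(\bm{x}) = e^{iH\tau}\ket{\bm{x}}\bra{\bm{x}}e^{-iH\tau}$ is preparable in polynomial time (local $H$, constant $\tau$), and measuring the $k$-local string $P_i$ on $O(m^2\log(mN/\delta)/\epsilon^2)$ copies yields, via Hoeffding and a union bound over the $m$ features and $N$ samples, estimates with $|\widehat{\phi}_i(\bm{x}_\ell) - \phi_i(\bm{x}_\ell)| \le \epsilon_1$ for a chosen $\epsilon_1 \le \epsilon/m$, simultaneously with probability $1-\delta/2$. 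I would then run the $\ell_1$-constrained least-squares (LASSO) estimator $\widehat{w} = \argmin_{\|w\|_1\le m} \frac{1}{N}\sum_{\ell}(\langle w, \widehat{\phi}(\bm{x}_\ell)\rangle - y_\ell)^2$ and output $h^*(\bm{x}) = \langle \widehat{w}, \widehat{\phi}(\bm{x})\rangle$, re-estimating the feature vector of each test point on the fly.

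The heart of the argument is a two-step error decomposition. Writing $\eta_\ell := y_\ell - \langle\bm{\alpha}, \widehat{\phi}(\bm{x}_\ell)\rangle$, one has $|\eta_\ell| \le \epsilon_2 + m\epsilon_1 =: \eta$ by the triangle inequality. The optimality of $\widehat{w}$ gives the LASSO \emph{basic inequality}: expanding $\frac1N\sum_\ell(\langle\widehat{w},\widehat{\phi}(\bm{x}_\ell)\rangle - y_\ell)^2 \le \frac1N\sum_\ell(\langle\bm{\alpha},\widehat{\phi}(\bm{x}_\ell)\rangle - y_\ell)^2$ and applying Cauchy--Schwarz yields $\frac1N\sum_\ell\langle\widehat{w}-\bm{\alpha},\widehat{\phi}(\bm{x}_\ell)\rangle^2 \le 4\eta^2$. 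Replacing $\widehat{\phi}$ by the exact $\phi$ costs only $O(m\epsilon_1\eta + m^2\epsilon_1^2)$ in this empirical average (again using $\|\widehat{w}-\bm{\alpha}\|_1 \le 2m$). Finally, uniform convergence over the $\ell_1$-ball --- the empirical Rademacher complexity of $\{\bm{x}\mapsto\langle w,\phi(\bm{x})\rangle : \|w\|_1 \le 2m\}$ with unit-bounded features is $O(m\sqrt{\log(m)/N})$, and Talagrand contraction promotes this to the squared loss --- transfers the empirical bound to the population quantity $\mathbb{E}_{\bm{x}\sim\mathcal{D}}[\langle\widehat{w}-\bm{\alpha},\phi(\bm{x})\rangle^2]$ up to an additive gap $O(m^2\sqrt{\log(m/\delta)/N})$. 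Choosing $\epsilon_1 \le \epsilon/m$ and $N = O(m^4\log(\poly(n)/\delta)/\epsilon^2) = O(\poly(n)\log(\poly(n)/\delta)/\epsilon^2)$ makes every term $O(\epsilon)$; accounting for the test-time feature noise, which adds at most $O((m\epsilon_1)^2)$, then gives $\mathbb{E}_{\bm{x}\sim\mathcal{D}}[|f^{\bm{\alpha}}(\bm{x}) - h^*(\bm{x})|^2] \le \epsilon$, matching the sample bound of Eq.~(\ref{eq:bound}).

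The step I expect to be the main obstacle is controlling the \emph{errors-in-variables} amplification: both the training features and the test-time hypothesis use estimates whose per-coordinate error $\epsilon_1$ is multiplied by $\|w\|_1 \le m$, and the adversarial label error $\epsilon_2$ could naively be amplified by the $O(m)$ range of the residuals. The resolution --- and the reason the proof goes through despite $\epsilon_2$ being allowed as large as $\epsilon$ --- is the self-normalizing nature of the LASSO basic inequality, which bounds the empirical error \emph{quadratically} in the noise level $\eta = \epsilon_2 + m\epsilon_1$ rather than linearly, so that $\epsilon_2 \le \epsilon$ contributes only $O(\epsilon^2)$. Verifying the Rademacher and contraction bounds and balancing the three error sources is then routine.
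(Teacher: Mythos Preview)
Your proposal is correct and follows essentially the same route as the paper: build the quantum feature map $\phi_i(\bm{x})=\Tr[\rho_H(\bm{x})P_i]$, fit by $\ell_1$-constrained least squares (LASSO) over $\|w\|_1\le m$, and control the population error via the Rademacher complexity of the $\ell_1$-ball with unit-bounded features. The paper packages the last two steps by citing the textbook LASSO risk bound (Mohri et al., Thm.~11.16) and bounding the empirical risk through the feasibility of $w=\bm{\alpha}$, whereas you spell out the basic inequality and are somewhat more explicit about the errors-in-variables from feature estimation at training and test time; the algorithm, the error decomposition, and the resulting sample complexity are the same.
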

The idea for the quantum algorithm is the following. For every point $\bm{x}_\ell$ we construct a vector $\phi(\bm{x}_\ell)\in [-1,1]^m$ of expectation values of the single Pauli strings present in $O$ on the time evolved quantum states. The model $h(\bm{x})=w\cdot\phi(\bm{x})$ is then trained on the data samples with a LASSO regression to find a $w^*$ so that the trained model is in agreement with the training samples, i.e. $h^*(\bm{x}_\ell)=w^*\cdot\phi(\bm{x}_\ell)\approx y_\ell$ for any $(\bm{x}_\ell,y_\ell)\sim\mathcal{T}^{\bm{\alpha}}_{\epsilon_2}$. As the $\ell_1$-norm of the optimal $w_{opt}={\bm{\alpha}}$ scales polynomially in $n$, imposing the constraint $||w||_{\ell_1}\leq B$ with $B=\mathcal{O}(\mathrm{poly}(n))$ in the LASSO regression will allow to obtain an error $\epsilon$ in the generalization performance using a training set of at most polynomial size. This is because the generalization error for the LASSO regression is bounded linearly by $B$~\cite{mohri2018foundations}. The description of the quantum algorithm can be found in Algorithm~\ref{algo:quantum}, while we leave the precise analysis of its sample and time complexity in the Appendix~\ref{app: quantum learnability}.

\begin{figure}
\begin{algorithm}[H]
\caption{Quantum Algorithm}\label{algo:quantum}
\begin{algorithmic}
\State \begin{enumerate}
     \item For every training point in $T^\mathbf{{\bm{\alpha}}}_{\epsilon_2} = \{(\bm{x}_\ell,y_\ell)\}_{i=1}^N$ the quantum algorithm prepares poly($n$) copies of the state $\rho_{H}(\bm{x}_\ell)$ and computes the estimates of the expectation values $\langle P_j\rangle_\ell=\text{Tr}[\rho_H(\bm{x}_\ell) P_j] \;\; \forall j=1,...,m$ up to a certain precision $\epsilon_1$. Note that $m$ scales at most polynomially in $n$ as $\{P_j\}_{j=1}^m$ are local observables.
    \item Define the model $h(\bm{x})=w\cdot\phi(\bm{x})$, where $\phi(\bm{x})$ is the vector of the Pauli string expectation values $\phi(\bm{x})= \left[\text{Tr}[\rho_{H}(\bm{x})P_1],...,\text{Tr}[\rho_{H}(\bm{x})P_m]\right]$ computed at Step 1. Then given as hyperparameter a $B\geq0$ the LASSO ML model finds an optimal $w^*$ from the following optimization process:
     \begin{equation}\label{eq: regression}
         \min_{\substack{w\in \mathbb{R}^m\\ ||w||_1\leq B}}\;\; \frac{1}{N}\sum_{l=1}^{N}|w\cdot\phi(\bm{x}_l) - y_l|^2
     \end{equation}
     with $\{(\bm{x}_l,y_l=\text{Tr}[\rho_H(\bm{x}_l)O({\bm{\alpha}})])\}_{l=1}^N$ being the training data. 
     
     Importantly, to meet the learning condition the optimization does not need to be solved exactly, i.e. $w^*={\bm{\alpha}}$. As we will make it clear in the Appendix~\ref{app: quantum learnability}, it is sufficient to obtain a $w^*$ whose training error is $\epsilon_2$ larger than the optimal one.
     \end{enumerate}
     
   \end{algorithmic}  
    
\end{algorithm}
\end{figure}

\subsection{Generalizations: quantum advantages for fixed inputs and for ground state problems}
We showed that the concept class defined in Eq. (\ref{eq:concept class}) leads to a learning advantage. In this case, the physical problem modeled by $\calF_{\mathrm{evolved}}^{H,O}$ assumes the initial input states are drawn from an underlying distribution, while there is precise control over the Hamiltonian governing their evolution. This raises the question of whether learning advantages can be extended to other scenarios, modeled by cases with limited control over the Hamiltonian. We argue that the following scenarios can represent practical examples in Section~\ref{sec: conclusion}.
\vspace{4mm}

\textbf{Fixed inputs, unknown Hamiltonians} A first example that naturally arises from the case discussed above is when a fixed initial state is prepared and time-evolved under a Hamiltonian from a fixed family of Hamiltonians, over which there is no full control. Specifically the Hamiltonians in such a family will be labeled by some input bitstring $\bm{x}$, which for example could parameterize the strength of the coupling interactions. Each initial state will then be evolved by a different Hamiltonian in the family accordingly to the input $\bm{x}$, which comes from an unknown underlying distribution. It is easy to see that the mathematical description of such a learning problem is again defined by the concept class in Eq.(\ref{eq:concept class}). The only change is in the definition of $\rho_{H}(\bm{x})$, now being $\rho_{H}(\bm{x})=U(\bm{x})\ket{0}\bra{0}U(\bm{x})^{\dagger}$ with $H(\bm{x})$ the Hamiltonian associated to the quantum circuit $U(\bm{x})$ by the Feynman construction \cite{feynman1985quantum}. A learning advantage exists for such concept class as well, as the general definition of a language $\mathcal{L}$ in  $\textrm{BQP}$ implies the existence of a family of circuits $\{U(\bm{x})\}_x$ which correctly decides every $\bm{x}\in \mathcal{L}$.  
\vspace{4mm}

\textbf{Ground state problem} As the next example, we consider predicting ground state properties of local Hamiltonians. Here, the states to be measured are the ground states of local Hamiltonians, rather than time-evolved quantum states. Specifically, ground states of input $k$-local Hamiltonians, which belong to a family of Hamiltonians, are prepared (Section~\ref{sec: conclusion} discusses when this is feasible on a quantum computer). However, there is no full control over the coupling parameters, which are instead random values drawn from an underlying distribution. This situation is close, for example, to the case of the random Ising \cite{nattermann1998theory} or random Heisenberg model \cite{oitmaa2001two}.
The mathematical formalization of such a learning problem is the following.
Consider a family of parametrized local Hamiltonians $\mathcal{H}=\{H(\bm{x})\;\; | \;\; \bm{x}\in\{0,1\}^n\}$. Let us define the concept class for the ground state learning problem similarly to the time-evolution case of Eq. (\ref{eq:concept class}), where now the unknown observable $O({\bm{\alpha}})$ is measured on the states $\rho_{H}(\bm{x})$, which correspond to the ground states of the Hamiltonians $H(\bm{x})\in \mathcal{H}$, then we define:
\begin{equation}\label{eq: gs concept class}
    \mathcal{F}^{\mathcal{H},O}_{\mathrm{g. s.}}=\{f^{\bm{\alpha}} (\bm{x})\in\mathbb{R} \;\; | \; \mathbf{{\bm{\alpha}}}\in[-1,1]^m\}
    \end{equation}
\begin{align*}
    \text{with:  } \;\;&f^{\bm{\alpha}}(\bm{x}): \;\;\;\bm{x}\in \{0,1\}^n \rightarrow f^{\bm{\alpha}}(\bm{x})=\text{Tr}[\rho_H(\bm{x})O(\mathbf{{\bm{\alpha}}})]\ \\
    &O({\bm{\alpha}})=\sum_{i=1}^m {\bm{\alpha}}_i P_i.
\end{align*}
Considering the training data $\mathcal{T}^\mathbf{{\bm{\alpha}}}_{\epsilon_2} = \{(\bm{x}_\ell, y_\ell\approx_{\epsilon_2} f^{\bm{\alpha}}(\bm{x}_\ell))\}_{\ell=1}^N$, the learning condition remains the same as in Definition \ref{def:learning}.
From the hardness result of Lemma \ref{lemma: cl hardness}, we obtain the following Theorem
\begin{theorem}[Learning advantage for the ground state learning problem]\label{thm: separation gs}
For any $\BQP$-complete language there exists a family of Hamiltonians $\mathcal{H}_{\mathrm{hard}}$ such that no classical algorithm can learn the ground state problem, formalized by learning the concept class $\mathcal{F}^{\mathcal{H}_{\mathrm{hard}},O}_{\mathrm{g. s.}}$ in the sense of Def.\ref{def:learning}, unless $\BQP\subseteq\mathsf{P\slash poly}$. However, there exists a quantum algorithm which learns $\mathcal{F}^{\mathcal{H}_{\mathrm{hard}},O}_{\mathrm{g. s.}}$ under any input distribution $\mathcal{D}$. 
\end{theorem}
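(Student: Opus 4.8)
The plan is to establish Theorem~\ref{thm: separation gs} by mirroring the two-part structure used for the time-evolution problem (classical hardness, then quantum learnability), the only substantive change being the choice of $\mathcal{H}_{\mathrm{hard}}$ and the mechanism by which the unknown concept encodes a $\BQP$ computation. Whereas in Lemma~\ref{lemma: cl hardness} the circuit $U_{BQP}$ was embedded into a \emph{time evolution} $e^{iH_{\mathrm{hard}}\tau}$, here I would embed it into the \emph{ground state} of a family of local Hamiltonians via the Feynman--Kitaev history-state construction. Concretely, fix a $\BQP$-complete language $\mathcal{L}$ decided by a uniform circuit family, and for each input $\bm{x}$ let $V_{\bm{x}}$ be the circuit that prepares $\ket{\bm{x}}$ and applies the deciding circuit $U_T\cdots U_1$. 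I would take $H(\bm{x})\in\mathcal{H}_{\mathrm{hard}}$ to be the geometrically local clock Hamiltonian whose unique ground state is the history state
\begin{equation*}
  \ket{\eta_{\bm{x}}} = \frac{1}{\sqrt{T+1}}\sum_{t=0}^{T}\ket{t}_{\mathrm{clock}}\otimes U_t\cdots U_1\ket{\bm{x}},
\end{equation*}
so that $\rho_{H}(\bm{x})=\ket{\eta_{\bm{x}}}\bra{\eta_{\bm{x}}}$.

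For classical hardness I would reuse the reduction of Lemma~\ref{lemma: cl hardness} essentially verbatim, since the ground-state concept class of Eq.~(\ref{eq: gs concept class}) has the same functional form as Eq.~(\ref{eq:concept class}). It suffices to exhibit a single local Pauli observable $O'$ (hence a distinguished concept $\bm{\alpha}'$) whose expectation decides $\mathcal{L}$. Taking $O'$ to be the operator that projects the clock onto the final time $T$ and reads the output qubit, one computes $\mathrm{Tr}[\rho_H(\bm{x})O'] = \tfrac{1}{T+1}\bra{\psi_T}Z_{\mathrm{out}}\ket{\psi_T}$ with $\ket{\psi_T}=U_T\cdots U_1\ket{\bm{x}}$, whose sign encodes membership in $\mathcal{L}$ with an inverse-polynomial margin $\Omega(1/(T+1))$ coming from the weight the history state places on the final clock step. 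Because the margin is $1/\poly(n)$ and the learner of Def.~\ref{def:learning} runs in time $\poly(1/\epsilon)$, I can take $\epsilon$ polynomially small and still argue efficiently: by the result of~\cite{schapire1990strength}, learnability of $\mathcal{F}^{\mathcal{H}_{\mathrm{hard}},O}_{\mathrm{g. s.}}$ would yield a $\poly$-size circuit evaluating $f^{\bm{\alpha}'}$ on every input, placing $\mathcal{L}$ in $\mathsf{P/poly}$ and hence $\BQP\subseteq\mathsf{P/poly}$, contrary to assumption.

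Quantum learnability would follow the LASSO pipeline of Theorem~\ref{lemma: q easiness} and Algorithm~\ref{algo:quantum} unchanged, once the quantum algorithm can prepare $\rho_H(\bm{x})$. The essential observation is that, although preparing ground states of arbitrary local Hamiltonians is intractable, the ground states of $\mathcal{H}_{\mathrm{hard}}$ are history states, which are \emph{efficiently preparable} given the known circuit $V_{\bm{x}}$: one prepares the uniform clock register and applies the gates $U_t$ controlled on the clock, at $\poly(n)$ cost and without recourse to the spectral gap. With $\poly(n)$ copies of $\rho_H(\bm{x}_\ell)$ in hand, the algorithm estimates the local Pauli features $\phi(\bm{x}_\ell)=[\mathrm{Tr}[\rho_H(\bm{x}_\ell)P_1],\dots,\mathrm{Tr}[\rho_H(\bm{x}_\ell)P_m]]$ to precision $\epsilon_1$ and runs the constrained regression of Eq.~(\ref{eq: regression}). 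Since $\bm{\alpha}$ is feasible with $\|\bm{\alpha}\|_1=\mathcal{O}(\poly(n))$, the generalization bound inherited from~\cite{mohri2018foundations} gives the sample complexity of Eq.~(\ref{eq:bound}) and the learning guarantee of Def.~\ref{def:learning}.

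The main obstacle is making a single family $\mathcal{H}_{\mathrm{hard}}$ simultaneously satisfy two opposing demands: its ground-state concepts must be classically hard to evaluate, yet quantumly easy to prepare and measure. The history-state construction threads this needle, but two points require care. First, the gap of these clock Hamiltonians decays polynomially, which is precisely what places the problem outside the reach of the classical ground-state learners of~\cite{huang2022provably,lewis2024improved} (valid only for constant-gap Hamiltonians); I must verify that geometric locality and the inverse-polynomial gap are maintained by the construction. Second, because the history state dilutes the readout by $1/(T+1)$, I must confirm that the resulting margin is still large enough that an efficient ($\poly(1/\epsilon)$) learner would decide $\mathcal{L}$ --- equivalently, that taking $\epsilon=\Theta(1/\poly(n))$ keeps the reduction of~\cite{schapire1990strength} polynomial-time.
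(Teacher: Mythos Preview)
Your proposal is correct and follows essentially the same approach as the paper: both use the Feynman--Kitaev history-state (circuit-to-Hamiltonian) construction to define $\mathcal{H}_{\mathrm{hard}}$, invoke the Schapire boosting result to reduce classical learnability to $\BQP\subseteq\mathsf{P/poly}$, and reuse the LASSO pipeline of Algorithm~\ref{algo:quantum} for quantum learnability by noting that the history state is efficiently preparable from the known circuit description. You are in fact more explicit than the paper's inline proof about the $1/(T{+}1)$ dilution of the readout signal and the need to take $\epsilon=\Theta(1/\poly(n))$; the paper handles this only implicitly via the appendix construction (where the observable $Z\otimes I\otimes\ket{1}\bra{1}_T^{\mathrm{clock}}$ likewise yields an inverse-polynomial margin), so your care here is warranted and your resolution is correct.
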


\begin{proof}
The existence of a class of Hamiltonian $\mathcal{H}_{\mathrm{hard}}$ for which the ground state problem is classically hard to learn is guaranteed by the argument above regarding the hardness of time-evolution case.   
The structure of the proof is exactly the same of the proof for Theorem \ref{thm: separation} rigorously written in the Appendix~\ref{app: classical hardness}.
The only missing step for the ground state version is that now the states $\rho_{H}(\bm{x})$ we are considering in Eq. (\ref{eq: gs concept class}) are ground state of Hamiltonians $H(\bm{x})$ and not time evolved states. However, using the Kitaev's construction \cite{kitaev2002classical,kempe20033}, it is possible to create for any $\bm{x}\in\{0,1\}^n$ and $U$ a local Hamiltonian $H(\bm{x})$ such that its ground state will have a large overlap with $U\ket{\bm{x}}$, with $U$ an arbitrary quantum circuit with a polynomial depth. This completes our proof of classical hardness as we consider the family of Hamiltonian $\mathcal{H}_{\mathrm{hard}}$ to exactly be the set of such $\{H(\bm{x})\}_x$ with $U$ implementing a $\BQP$-complete computation~\footnote{More rigorously it means that for every $n$, we consider $U=U^n_{BQP}$ from the family $\{U^n_{BQP}\}_n$ of quantum circuits which correctly decide a $\BQP$-complete language $\mathcal{L}$}. Same as before, it is still the case that there is at least one concept which can not be evaluated by polynomially sized classical circuits. Finally, also the quantum algorithm with learning guarantees for the concept class $\mathcal{F}^{\mathcal{H}_{\mathrm{hard}},O}_{\mathrm{g. s.}}$ closely follows Algorithm \ref{algo:quantum}. The only missing point to prove here is that the ground states $\rho_{\mathcal{H}_{\mathrm{hard}}}(\bm{x})$ are easily preparable on a quantum computer. Recall that the class of hard Hamiltonian $\mathcal{H}_{\mathrm{hard}}$ considered in the hardness result of the ground state problem are the one derived from the Kitaev's circuit-to-Hamiltonian construction from a $\BQP$-complete circuit. It is well known that those Hamiltonians present a $\Omega(1/\text{poly}(n))$ gap (in contrast to the classically learnable Hamiltonians in \cite{huang2022provably}) and it is possible to construct the ground state $\rho_{\mathcal{H}_{\mathrm{hard}}}$ from the description of the corresponding Kitaev's Hamiltonian, known to the learner through the description of the concept class and the input $\bm{x}$. This then concludes our proof for the quantum learnability of $\mathcal{F}^{\mathcal{H}_{\mathrm{hard}},O}_{\mathrm{g. s.}}$, thus completing the entire proof of Theorem \ref{thm: separation gs} 
\end{proof}

It is important to note that although our separation results, both for the time-evolution and the ground state versions of the problem, are proven using Kitaev's Hamiltonians, analogous results hold for a broader and more physical class of Hamiltonians. Specifically, since we rely on the assumption that $\BQP \not\subseteq \mathsf{P/poly}$, any quantum process that cannot be simulated by polynomial-sized circuits gives rise to a learning problem of the kind we introduced, with a provable quantum learning advantage. In Section~\ref{sec: conclusion} we list examples of physical Hamiltonians to which our results apply, covering both the time-evolution and ground state versions of the problem.

We note that the ground state learning problem defined by the class in Eq. (\ref{eq: gs concept class}) closely resembles the machine learning problem studied in \cite{huang2022provably} and \cite{lewis2024improved}, where the authors demonstrated that a classical algorithm could solve the task, provided the Hamiltonian has a constant gap. Since the Hamiltonians considered in our results have a polynomially decaying gap, this imposes a constraint on the family of local Hamiltonians $\mathcal{H}$ that one would need to consider in order to prove a quantum advantage in learning. 
\vspace{4mm}

\textbf{The ``flipped case''} As a final remark, it is interesting to observe that if we consider the case of fixed input state, which could be a fixed initial state $\rho_0(\bm{x})=\ket{\bm{x}}\bra{\bm{x}}$ in the time-evolution scenario modeled by the concept class in Eq. (\ref{eq:concept class}) or a single ground state $\rho_{H}(\bm{x})$ of a fixed Hamiltonian $H(\bm{x})$ in the ground state problem, then the learning problem becomes trivially classically easy. Such learning scenario is formally equivalent to considering a ``flipped concept'' of the ones defined above. Let us for example consider the task of learning an unknown quantum process from many measurements, in this case the learning problem is still modeled by the concept class $\mathcal{F}^{H,O}_{\mathrm{evolved}}$ of Eq. (\ref{eq:concept class}) with the difference that now the role of $\bm{x}$ and ${\bm{\alpha}}$ are switched.
Namely, the concepts are defined as $f^{\bm{x}}({\bm{\alpha}})$ so that the their expressions remain the same of $f^{\bm{\alpha}}(\bm{x})$. The difference lies in the labeling, where $\bm{x}$ now denotes the concept while ${\bm{\alpha}}$ represents the input vectors. As $\bm{x}$ is constant for an instance of the learning problem, specified by the concept that generates the data,
the training samples are measurements of the same quantum state $\rho_{H(\bm{x})}$ with different observables corresponding to different ${\bm{\alpha}}$.
Since $O({\bm{\alpha}})=\sum_{j}{\bm{\alpha}}_j P_j$, we can make use of data to solve the linear system and obtain the expectation values of each local Pauli string $\text{Tr}[\rho_{H_{\mathrm{hard}}}(\bm{x})P_i]$. It then becomes easy to extrapolate the value of $\text{Tr}[\rho_{H_{\mathrm{hard}}}(\bm{x})O({\bm{\alpha}})]$ for every new ${\bm{\alpha}}$.

\subsection{Generalization of the quantum learning mechanism to quantum kernels}

In \cite{liu2021rigorous}, considerable effort was invested to construct a task with provable learning speed-up where the quantum learning model is somewhat generic (related to natural QML models studied in literature), while still being capable of learning a classically unlearnable task.
In our construction, it is possible to view the entire learning process as a quantum kernel setting, similarly to the approach in \cite{yamasaki2023advantage}. 
However, in standard kernel approaches, especially those stemming from support vector machines, the optimization process is solved in the dual formulation where the hypothesis function is expressed as a linear combination of kernel functions. In contrast, the LASSO optimization employed here solves the optimization in the primal form with a constrain on the $\ell_1$ norm. This is an issue making our learner not \textit{technically} a quantum kernel. While the LASSO formulation does not straightforwardly convert into a kernel method, one could attempt to address the regression problem outlined in Equation (\ref{eq: regression}) using an alternative optimization approach that supports a kernel solution, such as kernel ridge regression. In this case the optimization is done for vectors with bounded $\ell_2$ norms, nevertheless there exist bounds on the generalization performance of such procedures as well \cite{mohri2018foundations}.

\section{Generalization to observables parametrized by unitaries}\label{section: unitaries}

We have demonstrated the existence of a quantum advantage for the learning problem of predicting $k$-local observables from time evolved states and from ground states.
Critically, we considered observables of the type $O({\bm{\alpha}})=\sum_i {\bm{\alpha}}_i P_i$ and we exploited the linear structure of $O({\bm{\alpha}})$ to ensure quantum learnability through LASSO regression. It is however natural to ask if quantum learnability can be achieved for other types of observables, while maintaining the classical hardness. In this section we consider the far more general case where the unknown observable is parameterized through a unitary matrix, i.e. $O({\bm{\alpha}})=W({\bm{\alpha}}) O W^\dagger({\bm{\alpha}})$ where $O$ is an hermitian matrix.  
Our findings in this scenario will be of two kinds. First we show as a general result that \textit{for every method which learns a unitary $W$ given query access on a known distribution of input quantum states there exists a learning problem which exhibits a classical-quantum advantage}. Then we concretize the general result by presenting a constructed example of a learning problem defined by a class of unitary-parameterized observables, showcasing a provable speed-up. Before stating our findings, let us properly introduce the learning problem under consideration. Imagine a scenario where observables are measured on evolved quantum states, but there is no control over the entire evolution, with a portion of it remaining unknown. Specifically, measurements are taken on input-dependent states $\ket{\psi(\bm{x})} = W({\bm{\alpha}}) U(\bm{x}) \ket{0}$ for an unknown fixed ${\bm{\alpha}}$. The goal is to predict expectation values on states $\ket{\psi(\bm{x}')}$ for new inputs $\bm{x}'$.
Concretely such scenario corresponds to the following concept class:
\begin{equation}\label{eq: concept unitary}
\mathcal{M}_{U,W,O}=\{f^{\bm{\alpha}}(\bm{x})\in\mathbb{R}\;\;|\;\; {\bm{\alpha}}\in [-1,1]^m\}\end{equation}
\begin{align*}
\text{with: } &\;\; f^{\bm{\alpha}}(\bm{x}): \;x\in\mathcal{X}\subseteq\{0,1\}^n\rightarrow \text{Tr}[\rho_{U}(\bm{x})O({\bm{\alpha}})]\in \mathbb{R} \\
&O({\bm{\alpha}})=W({\bm{\alpha}}) O W^\dagger ({\bm{\alpha}}). 
\end{align*}
where $\rho_U(\bm{x})=U(\bm{x})\ket{0}\bra{0}U^\dagger(\bm{x})$ and $O$ is a hermitian matrix.
Given as training data $\mathcal{T}^{\bm{\alpha}}=\{(\bm{x}_\ell,y_\ell)\}_{\ell=1}^N$ with $\mathbb{E}[y_\ell]=\text{Tr}[\rho_{U}(\bm{x}_\ell)O({\bm{\alpha}})]$, the goal of the learning algorithm is again to satisfy the learning condition of Def.~\ref{def:learning}. We are now ready to state our main general result of this Section:
\begin{theorem}[Informal version]\label{thm: separation W}
Every (non-adaptive)~\footnote{Non-adaptive means that the algorithm probes the unitary with some input states and measures
some observables which are chosen independently, meaning they do not have any choice which depends on
previous outcomes.}  learning algorithm $\mathcal{A}_W$ for learning a unitary $W(\bm{\alpha})\in\{W(\bm{\alpha})\}_{\bm{\alpha}}$, where the probe states $\{\ket{\psi_l}\}_l$ and observables $\{Q_m\}_m$ come from discrete sets $S=\{\ket{\psi_{\ell}}\}_{\ell}$ and $Q=\{Q_m\}_m$ (or they can be discretized with controllable error), induces a classical-input-classical-output learning problem with a quantum-classical learning advantage.
 
\end{theorem}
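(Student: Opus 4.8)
\emph{Proof proposal.} The plan is to establish the advertised separation with the same two-part template used for Theorem~\ref{lemma: cl hardness}: a distribution-free classical hardness argument resting on $\BQP\not\subseteq\mathsf{P/poly}$, and a quantum learner that this time delegates the heavy lifting to the assumed unitary-learning routine $\mathcal{A}_W$. The starting observation is that a single label of the concept class~(\ref{eq: concept unitary}) can be read as one query to the unknown unitary: writing $|\psi(\bm{x})\rangle=U(\bm{x})|0\rangle$, we have $f^{\bm\alpha}(\bm{x})=\langle\psi(\bm{x})|W(\bm\alpha)OW^\dagger(\bm\alpha)|\psi(\bm{x})\rangle$, i.e.\ the expectation of the fixed observable $O$ under the action of $W(\bm\alpha)$ on the probe $|\psi(\bm{x})\rangle$. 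Hence every input $\bm{x}$ simultaneously encodes a classical string and realizes one non-adaptive probe-state/observable query of the kind $\mathcal{A}_W$ consumes, and \emph{constructing} the learning problem amounts to choosing $U$, $O$ and the input encoding that make these two readings coincide.

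For classical hardness I would reuse the argument of Theorem~\ref{lemma: cl hardness} essentially verbatim. Fix a $\BQP$-complete language $\mathcal{L}$ and a distinguished parameter $\bm\alpha^\star$ with $W(\bm\alpha^\star)=\mathbb{I}$; take $O=Z\otimes\mathbb{I}\otimes\cdots\otimes\mathbb{I}$ and let $U(\bm{x})$ be the circuit-to-state compilation of the $\BQP$ circuit deciding $\mathcal{L}$ on input $\bm{x}$. Then $f^{\bm\alpha^\star}(\bm{x})=\langle\psi(\bm{x})|Z_1|\psi(\bm{x})\rangle$ has a sign that decides $\mathcal{L}$, so this single concept computes a $\BQP$-complete function. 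By the boosting result of~\cite{schapire1990strength}, distribution-free learnability in the sense of Def.~\ref{def:learning} would yield a polynomial-size circuit evaluating $f^{\bm\alpha^\star}$ on every input, i.e.\ $\mathcal{L}\in\mathsf{P/poly}$; thus no efficient classical learner exists unless $\BQP\subseteq\mathsf{P/poly}$.

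For quantum learnability the plan is to run $\mathcal{A}_W$ as a subroutine and then predict with the reconstructed unitary. The one structural gap is that a bare label exposes only the single observable $O$, whereas $\mathcal{A}_W$ expects to pick observables from $Q=\{Q_m\}_m$ and probes from $S=\{|\psi_\ell\rangle\}_\ell$. I would close this gap by \emph{designing} the input encoding: append a $\lceil\log|Q|\rceil$-qubit ancilla, let $U(\bm{x})$ with $\bm{x}=(\ell,m)$ prepare $|\psi_\ell\rangle\otimes|m\rangle$, replace the fixed observable by the block-diagonal $O=\sum_m Q_m\otimes|m\rangle\langle m|$, and let $W(\bm\alpha)$ act as $W(\bm\alpha)\otimes\mathbb{I}$. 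A short calculation then gives $f^{\bm\alpha}(\ell,m)=\langle\psi_\ell|W(\bm\alpha)Q_mW^\dagger(\bm\alpha)|\psi_\ell\rangle$, so the classical inputs range exactly over the (discretized) query set of $\mathcal{A}_W$ and each label is the corresponding query answer. The quantum learner probes the unknown process with these encoded inputs, estimates each answer to precision controlled by its shot budget, feeds the estimates to $\mathcal{A}_W$, and obtains a classical description of $\hat W$ with $\|\hat W-W(\bm\alpha)\|$ small. It then outputs the quantum hypothesis $h(\bm{x})=\mathrm{Tr}[\rho_U(\bm{x})\,\hat W O\hat W^\dagger]$; because this prediction rule uses only the learned $\hat W$ and not the unknown process, it is evaluable on any fresh input independently of the sampling distribution, so the learner meets Def.~\ref{def:learning} for every $\mathcal{D}$.

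The main obstacle is the quantum-easiness direction, and within it two quantitative points rather than the conceptual reduction. First, I must argue that the encoding is faithful and efficient: the discretization of $S$ and $Q$ (the ``controllable error'' clause of the statement) has to be fine enough that the perturbation it induces on each query answer stays below the tolerance $\mathcal{A}_W$ demands, while keeping $|S|,|Q|=\mathrm{poly}(n)$ so that inputs remain polynomial-length strings and $O$ stays efficiently measurable. Non-adaptivity of $\mathcal{A}_W$ is essential here, since it lets the entire query set be fixed in advance and folded into the input space before any data is seen. Second, I must propagate two error sources---the finite-shot estimation error of each expectation value and the approximation error $\|\hat W-W(\bm\alpha)\|$ returned by $\mathcal{A}_W$---through the prediction rule to the generalization bound, showing $\mathbb{E}_{\bm{x}\sim\mathcal{D}}|f^{\bm\alpha}(\bm{x})-h(\bm{x})|^2\le\epsilon$ with only $\mathrm{poly}(1/\epsilon,1/\delta,1/\epsilon_2,n)$ queries and samples; the operator bound $|\mathrm{Tr}[\rho(\hat W O\hat W^\dagger-WOW^\dagger)]|\le 2\|O\|\,\|\hat W-W\|$ makes this propagation routine once the sample complexity of $\mathcal{A}_W$ is invoked.
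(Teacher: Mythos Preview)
Your argument has a genuine gap: the two halves of your proof use two \emph{different} constructions of $U$, but $\mathcal{M}_{U,W,O}$ is a single concept class with a single fixed $U$. In the hardness part you take $U(\bm{x})$ to be the circuit-to-state compilation of a $\BQP$ circuit; in the learnability part you redefine $U(\bm{x})$ with $\bm{x}=(\ell,m)$ to prepare the easy probe state $|\psi_\ell\rangle\otimes|m\rangle$. These cannot both hold. If $U$ prepares only probe states from $S$ (which are by assumption polynomially describable, e.g.\ stabilizer states), then every $|\psi(\bm{x})\rangle$ is classically simulable and the hardness argument collapses. If instead $U$ implements the $\BQP$ circuit on every input, then the training labels $f^{\bm\alpha}(\bm{x})$ are expectations on states that are \emph{not} in $S$, so they are not inputs $\mathcal{A}_W$ can consume. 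Relatedly, your line ``the quantum learner probes the unknown process with these encoded inputs'' treats the learner as having query access, but in the PAC setting of Def.~\ref{def:learning} the training inputs $\bm{x}_\ell$ come from an unknown distribution $\mathcal{D}$---the learner cannot choose them.

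The paper's fix is to merge the two constructions into one $U$ by reserving the first bit $x_1$ as a switch: when $x_1=0$, the remaining bits $(\bm{x_Q},\bm{x_S})$ specify a probe state $|\psi_{\bm{x_S}}\rangle\in S$ and a measurement index, and a controlled unitary $V_A$ rotates $O$ into $Q_{\bm{x_Q}}$; when $x_1=1$, $U$ runs the $\BQP$ circuit on $\bm{x_S}$ and $V_A$ acts trivially. Hardness then comes from the concept $W(\bm 0)=\mathbb{I}$ restricted to the $x_1=1$ half, while the quantum learner extracts $W(\bm\alpha)$ via $\mathcal{A}_W$ from the $x_1=0$ half of the training data. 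The price of this merge is that quantum learnability is proved only for a restricted family of distributions $\{\mathcal{D}_i\}_i$ in which $x_1$ is balanced and the $x_1=0$ marginal matches what $\mathcal{A}_W$ requires---not for every $\mathcal{D}$ as you claim. Your block-diagonal-$O$ trick for multiplexing the observables is essentially the paper's $V_A$, but without the $x_1$ switch the construction cannot simultaneously support both directions of the separation.
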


In other words, Theorem \ref{thm: separation W} guarantees that whenever there is an efficient method to learn a unitary from query access on a set of arbitrary input states and observables it is also possible to construct a learning problem described by the concept class in Eq.~(\ref{eq: concept unitary}) which exhibits a learning speed-up. We will more mathematically formalize the statement of Theorem~\ref{thm: separation W} and provide a rigorous proof in the Appendix~\ref{app:unitaries}. The nontrivial part of our result lies in the fact that in the majority of the works present in the literature, which provide efficient algorithms to learn a unitary, the set of required probe states $\{\ket{\psi_\ell}\}_\ell$ is restricted to specific quantum states, often classically describable (e.g. stabilizer states). On such a set of states, our hardness results from the previous section can not directly be applied as a classical learning algorithm could simply prepare those states as the quantum algorithm would do. 

The detailed proof of Theorem \ref{thm: separation W} can be found in the Appendix~\ref{app:unitaries}, here we outline the main ideas.

\begin{proof} [\textit{Proof sketch}]

    The main idea of the proof is to construct a concept class of the kind of Eq.(\ref{eq: concept unitary}) which can be learned by a quantum algorithm using the learning algorithm $\mathcal{A}_W$, while maintaining classical harndess. Consider the circuit in Figure \ref{fig:unitary separation}. Let $U(\bm{x})$ be a unitary that, depending on the first bit $x_1$ of the input $\bm{x}\in\{0,1\}^n$, prepares the $\ket{\psi^{n_S}_U(\bm{x})}$ on the $n_S$ qubit register in two different ways. If $x_1=0$, then $\ket{\psi^{n_S}_U(\bm{x})}$ is exactly one of the states $\ket{\psi_\ell}\in S$, labeled by the final $n_S$ bits $\bm{x_S}\in\{0,1\}^{n_S}$ of the input bitstring $\bm{x}$, i.e. $\bm{x_S}=x_{n-n_S}...x_n$. If $x_1=1$, then $U(\bm{x})$ prepares the state $\ket{\psi^{n_S}_U(\bm{x})}$ as the result of a $\BQP$-complete computation. Regarding the observable, we define a controlled unitary $V_A$, controlled by the first $1+n_Q$ qubits register, such that when $x_1=0$ $V_A$ acts on the target $n_S$ qubit register by rotating the $n_S$ qubit measurement operator $O$ into one of the $Q_m$ in the set $Q$ required by the learning algorithm. The description of which $Q_m$ the unitary $V_A$ implements is contained in the $n_Q$ bitstring $\bm{x_Q}\in\{0,1\}^{n_Q}$ with $\bm{x_Q}=x_2x_3...x_{n_Q+1}$. When $x_1=1$, $V_A$ acts as the identity matrix on the $n_S$ qubit register.
    Suppose now the input bits are sampled from an arbitrary distribution $\calD_i\in\{\calD_i\}_i$ of the following kind:
    \begin{itemize}
    \item The first bit $x_1$ of $\bm{x}$ is randomly selected with equal probability between 0 and 1.
     \item If $x_1=0$ then the other $n-1$ bit $x_2x_3...x_n$ are sampled from the required distribution~\footnote{Note that the distribution of probes states and measurements required by $\mathcal{A}_W$ to learn the unitary $W(\bm{\alpha})$ could be very complicated involving perhaps a joint distribution on probe states and measurements. Nevertheless every target distribution can be obtained by post-processing of samples from the uniform distribution, which can be done coherently by taking uniformly random input bitstrings.} by $\mathcal{A}_W$ to learn the unitary $W(\bm{\alpha})$.
    \item If $x_1=1$ then the $n_S$ bits in $\bm{x_S}$ follow any possible distribution over $\bm{x_S}\in\{0,1\}^{n_S}$. For each $i$, the distribution over $\bm{x_S}\in\{0,1\}^{n_S}$ defines the specific overall distribution $\calD_i$.
    \end{itemize}

Taking $W(\bm{0})=I^{\otimes n_S}$ for $\bm{\alpha}=\bm{0}$ and $O=I\otimes I^{\otimes n_Q}\otimes Z \otimes I^{n_S -1}$, it is clear that the concept $f^0=\text{Tr}[\rho_U(\mathbf{x})V_AOV^\dagger_A]$ cannot be learned by any classical algorithm.
In particular, by the result of our Theorem~\ref{lemma: cl hardness} any classical algorithm can not meet the learning condition of Def.~\ref{def:learning} on more than half of the inputs $\bm{x}$ coming from any distribution $\mathcal{D}_i$, namely it is guaranteed to fail for the inputs for which $x_1=1$. 
To prove quantum learnability we notice that for every $\bm{\alpha}$ the dataset $\mathcal{T}^{{\bm{\alpha}}}$ associated to the concept class $\mathcal{M}_{U,W,O}$ contains exactly the pairs of state and measurement outcomes needed by the learning algorithm $\mathcal{A}_W$ in order to learn the unitary $W(\bm{\alpha})$. Namely, half of the training samples, characterized by $x_1=0$ in their input $\bm{x}$, allows the algorithm to recover the unknown $W(\bm{\alpha})$. Therefore the quantum algorithm is able to learn the unknown observable $O({\bm{\alpha}})$ and evaluate it on each quantum state associated to every input $\bm{x}\in\{0,1\}^n$. 
\end{proof}

\subsection{Learning advantages for shallow unitaries}
Theorem~\ref{thm: separation W} presents a method for constructing learning problems that demonstrate a provable quantum advantage in learning unitarily parametrized observables of the form $O({\bm{\alpha}})=W({\bm{\alpha}}) O W^\dagger({\bm{\alpha}})$. This result crucially relies on the existence of a learning scheme capable of recovering the unitary $W({\bm{\alpha}})$ based on output measurements performed on an arbitrary set of input states. As a concrete example where such a learning scheme is known to exist, and thus our result applies, we consider the case where the family of unitaries  $\{W(\bm{\alpha})\}_{\bm{\alpha}}$ consists of shallow circuits. In this case, recent results in the literature~\cite{huang2024learning} provide a procedure to learn few bodies observables from measurement of them on the single qubit stabilizer states. As shallow circuits preserve the locality of the observable measured, we can directly apply the result in~\cite{huang2024learning} to construct a quantum learning algorithm for the concept class in Eq.~(\ref{eq: concept unitary}) where the concepts are related to observables $O({\bm{\alpha}})=W({\bm{\alpha}}) O W^\dagger({\bm{\alpha}})$, with $W({\bm{\alpha}}) $ a shallow unitary. In particular, in the Appendix~\ref{app:unitaries} we prove the following corollary of Theorem~\ref{thm: separation W}.

\begin{restatable}[Learning advantage for shallow unitaries]{corollary}{shallowadv}\label{lemma: learning unitary}
  There exists a family of parametrized  unitaries $\{U(\bm{x})\}_x$ and parametrized shallow circuits $\{W({\bm{\alpha}})\}_{\bm{\alpha}}$, a measurement $O$ and a set of distributions $\{\mathcal{D}_i\}_i$ over $\bm{x}\in\{0,1\}^n$ such that the concept class $\mathcal{M}_{U,W,O}$ is not classically learnable with respect to the set of input distributions $\{\mathcal{D}_i\}_i$, unless $\BQP\subseteq\mathsf{P\slash poly}$. However, there exists a quantum algorithm which learns $\mathcal{M}_{U,W,O}$ on the input distributions $\{\mathcal{D}_i\}_i$. 
\end{restatable}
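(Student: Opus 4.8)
The plan is to instantiate the general construction of Theorem~\ref{thm: separation W} with the concrete unitary-learning scheme of~\cite{huang2024learning}, after verifying that its hypotheses are exactly those required by the theorem. The structural fact that makes shallow circuits the right class is locality preservation: if $W(\bm{\alpha})$ has small (constant or $O(\log n)$) depth, then each qubit has a bounded backward lightcone, so the Heisenberg-evolved observable $O(\bm{\alpha}) = W(\bm{\alpha}) O W^\dagger(\bm{\alpha})$ remains a sum of geometrically local terms whenever $O$ is local, with only polynomially many terms. This is precisely the regime in which the procedure of~\cite{huang2024learning} efficiently reconstructs the evolved observable, and hence predicts $\Tr[\rho\, O(\bm{\alpha})]$ on new states, from measurement data.

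Next I would check that the scheme $\mathcal{A}_W$ of~\cite{huang2024learning} meets the three discreteness/non-adaptivity requirements of Theorem~\ref{thm: separation W}. First, it is non-adaptive: the probe states and measured observables are fixed in advance and do not depend on earlier outcomes. Second, its probe states are single-qubit stabilizer states, so the full probe register is a tensor product drawn from the finite set $S$ of size $6^{n_S}$; each element is indexed by a polynomial-length bitstring $\bm{x_S}$, exactly as needed to label the $x_1=0$ branch of the input. Third, the measured observables are few-body Pauli operators, forming a discrete set $Q$ indexed by a short bitstring $\bm{x_Q}$ that $V_A$ reads off to rotate $O$ into the requested $Q_m$. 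With these conditions met, Theorem~\ref{thm: separation W} applies directly.

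I would then spell out the induced concept class $\mathcal{M}_{U,W,O}$ using the circuit of Figure~\ref{fig:unitary separation}. The unitary $U(\bm{x})$ prepares, on the $x_1=0$ branch, the stabilizer product state $\ket{\psi_{\bm{x_S}}}\in S$, and on the $x_1=1$ branch the output of a fixed $\BQP$-complete computation; the controlled $V_A$ implements the local measurement $Q_{\bm{x_Q}}$ when $x_1=0$ and acts as the identity when $x_1=1$. Taking $W(\bm{0})=I^{\otimes n_S}$ and $O=I\otimes I^{\otimes n_Q}\otimes Z\otimes I^{\otimes(n_S-1)}$, the concept $f^{\bm 0}$ decides a $\BQP$-complete language on the $x_1=1$ inputs, so by Theorem~\ref{lemma: cl hardness} no polynomial-time classical learner can meet the condition of Def.~\ref{def:learning} on the half of each $\mathcal{D}_i$ with $x_1=1$, unless $\BQP\subseteq\mathsf{P\slash poly}$. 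Conversely, the $x_1=0$ half of any training set $\mathcal{T}^{\bm{\alpha}}$ supplies precisely the stabilizer-state/local-Pauli expectation values that $\mathcal{A}_W$ consumes, so a quantum learner runs $\mathcal{A}_W$ to recover $W(\bm{\alpha})$, reconstructs $O(\bm{\alpha})$, and predicts $f^{\bm{\alpha}}(\bm{x})$ on every input by preparing $\rho_U(\bm{x})$ directly.

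The main obstacle I anticipate is bookkeeping the error propagation rather than any conceptual gap. The labels $y_\ell$ carry finite-sampling error $\epsilon_2$, so I must confirm that $\mathcal{A}_W$ from~\cite{huang2024learning} is robust to statistical noise in its estimated expectation values and that its sample and time complexity stay polynomial in $n$ and $1/\epsilon$; I would also verify that restricting the $x_1=0$ sub-distribution to match the (possibly joint) distribution over probe states and observables required by $\mathcal{A}_W$ is achievable by coherent post-processing of uniform input bits, as in the proof of Theorem~\ref{thm: separation W}, and that the locality bound on $O(\bm{\alpha})$ holds uniformly over the parameter range $\bm{\alpha}\in[-1,1]^m$.
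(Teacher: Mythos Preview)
Your approach is essentially the paper's: instantiate Theorem~\ref{thm: separation W} with the shallow-circuit learning procedure of~\cite{huang2024learning}, using locality preservation to keep $O(\bm{\alpha})$ few-body. The classical hardness and quantum learnability arguments you sketch are exactly those in the paper's proof.

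One point of simplification you are missing: the paper's actual construction for the shallow case (Figure~\ref{fig:huang separation}) drops the $V_A$ unitary and the $\bm{x_Q}$ register entirely. This is because the relevant result from~\cite{huang2024learning} (their Lemma~10, restated in the paper as Lemma~\ref{lemma: huang shallow}) does \emph{not} require measuring a varying set of Pauli observables $Q_m$; it learns a few-body observable $O(\bm{\alpha})$ from data of the form $\{\ket{\psi_\ell},v_\ell\}$ where $\mathbb{E}[v_\ell]=\bra{\psi_\ell}O(\bm{\alpha})\ket{\psi_\ell}$ and $\ket{\psi_\ell}$ is a uniformly random product of single-qubit stabilizer states. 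In the language of Theorem~\ref{thm: separation W}, the set $Q$ is a singleton, so $n_Q=0$ and $V_A$ is trivial. The $x_1=0$ branch therefore just prepares the stabilizer product state $\ket{\psi_{\mathrm{stab}}(\bm{x})}$ encoded by the remaining $n-1$ bits (with $n_S=\lfloor n/3\rfloor$ so that three bits per qubit suffice to index $\mathrm{stab1}$), and the training labels $y_\ell$ are already exactly the $v_\ell$ the algorithm consumes. Your version with a nontrivial $Q$ would still work, but it carries machinery that is not needed here and your description of the measured observables as ``few-body Pauli operators'' chosen by $\bm{x_Q}$ slightly misreads what the Huang et al.\ algorithm requires.
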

\begin{proof}[Proof sketch]
 The complete proof of Corollary~\ref{lemma: learning unitary} is provided in Appendix~\ref{app:unitaries}. The proof follows the same approach as in Theorem~\ref{thm: separation W}, namely constructing a family of input distributions $\{\calD_i\}_i$ that allows a quantum algorithm to recover the target unitary $W(\bm{\alpha})$, while ensuring that the learning problem remains intractable for any classical algorithm. Specifically, this is achieved by designing a model that prepares stabilizer states for half of the inputs and performs a $\BQP$-hard computation for the other half. The quantum learning algorithm can then utilize the training data containing stabilizer states to learn the unitary $W(\bm{\alpha})$ by applying the procedure in~\cite{huang2024learning}. Importantly, it will also be able to evaluate the learned $O(\bm{\alpha})$ on the other half of input data.
\end{proof}

\begin{figure*}
    \centering
    \includegraphics[width=0.8\linewidth]{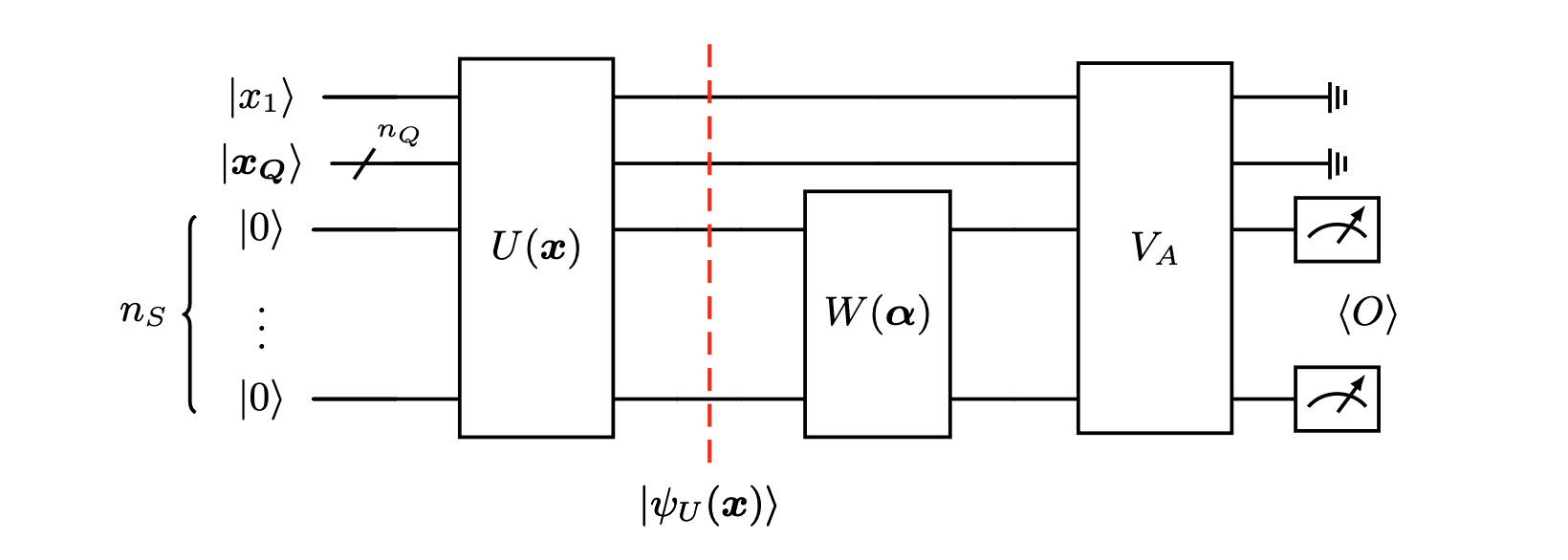}
     \caption{Quantum model which exhibits a learning speed-up for the concept class $\mathcal{M}_{U,W,O}$. The unitary $U(\bm{x})$ prepares the state $\ket{\psi_U(\bm{x})}=\ket{x_1}\otimes\ket{\bm{x_Q}}\otimes\ket{\psi^{n_S}_U(\bm{x_S})}$ where the form of $\ket{\psi^{n_S}_U(\bm{x_S})}$ depends on $x_1$, the first bit of each input $\bm{x}\in\{0,1\}^n$. If $x_1=0$, then $\ket{\psi^{n_S}_U(\bm{x_S})}=\ket{\psi_{\bm{x_S}}}\in S$ is the $n_S$ qubit state described by the bitstring  $\bm{x_S}=x_2x_3...x_{n_S}$ of the set $S$ of polynomially describable quantum states needed to learn $W(\bm{\alpha})$. If $x_1=1$, then $\ket{\psi^{n_S}_U(\bm{x_S})}$ is the quantum state which encodes the $\BQP$-complete computation which decides the $n_S$ input bits $\bm{x_S}$, considered as input of a $\BQP$-complete language $\mathcal{L}$ over $\bm{x_S}\in\{0,1\}^{n_S}$. $W({\bm{\alpha}})$ is an unknown parametrized unitary, in order to prove classical hardness it is sufficient to consider $W({\bm{\alpha}}=0)=I^{\otimes n_S}$ and the measurement operator to be $O=Z\otimes I \otimes ... \otimes I$ on the $n_S$ qubits register. The unitary $V_A$ rotates the measurement operator $O$ so that, when $x_1=0$, the final measurement is an operator $Q_{\bm{x_Q}}$ described by the bitstring $\bm{x_Q}$. This provides the right training samples to learn $W(\bm{\alpha})$, as explained in the proof of Theorem~\ref{thm: separation W}.
    }
  \label{fig:unitary separation}
\end{figure*}

\subsection{Relationship to Hamiltonian learning}\label{sec: relationHL} 

In order to further elucidate the lines between settings with and without classical/quantum learning advantages, we can establish a connection between the task of learning Hamiltonians from time-evolved quantum states and learning unitarily-parametrized observables. The setting of the Hamiltonian learning problem is the following. Given an unknown local Hamiltonian in the form of $H(\bm{\lambda})=\sum_i \lambda_i P_i$, the objective of the Hamiltonian learning procedure is to recover $\bm{\lambda}$. In the version of the problem we consider, we are given a black box which implements the time-evolution under the unknown Hamiltonian $H(\bm{\lambda})$ on any arbitrary quantum state. The black box action on arbitrary inputs $\rho$ is given by
\begin{equation}
    \rho\rightarrow U\rho U^\dagger
\end{equation}
with $U=e^{i tH(\bm{\lambda})}$, where the evolution time $t$ is known to us. In \cite{haah2022optimal} the authors provide a classical algorithm which learns the unknown $\bm{\lambda}$ from the expectation values of the Pauli string $P_i$ on polynomially many copies of the evolved state $U\rho U^\dagger$, for particular choices of initial states $\rho$'s. We can rephrase the Hamiltonian learning task in terms of a learning problem with many similarities to the ones we considered before. Concretely we define the corresponding concept class
\begin{equation}\label{eq: concept HL}
\mathcal{F}_{\bm{\lambda}}=\{f^{\bm{\lambda}}(\bm{x})\in\mathbb{R} \;\;|\;\; \bm{\lambda}\in [-1,1]^m\}
\end{equation}
\begin{align*}
\text{with:  }& \;\; f^{\bm{\lambda}}(\bm{x}): \;\bm{x}\in\mathcal{X}\subseteq\{0,1\}^n\rightarrow \text{Tr}[\rho(\bm{x})O(\bm{\lambda})] \\
&O(\bm{\lambda})=\sum_{i=1}^m U(\bm{\lambda})^\dagger P_i U(\bm{\lambda}). 
\end{align*}
where $\bm{x}$ describes the input state $\rho(\bm{x})$, and $U=e^{itH(\bm{\lambda})}$.
We note that the specific choice of the observable $O(\bm{\lambda}) = \sum_{i=1}^m U(\bm{\lambda})^\dagger P_i U(\bm{\lambda})$ is made to align with the work in~\cite{haah2022optimal}, where the learning algorithm solves the task by directly measuring the Pauli operators $\{P_i\}_i$ after evolving the input states with $U = e^{i t H(\bm{\lambda})}$. In this sense, by considering $\mathcal{T}^{\bm{\lambda}} = {(\bm{x}_\ell, \text{Tr}[\rho(\bm{x}_\ell)O(\bm{\lambda})])}_\ell$ as training data, the connection to the Hamiltonian learning task of \cite{haah2022optimal} becomes evident if we demand that the learning algorithm needs only to \textit{identify}, rather than \textit{evaluate}, the correct concept that generated the data. It is worth noting that the concept class $\mathcal{F}_{\bm{\lambda}}$ closely resembles the concept class for the time-evolution problem described earlier, for which a learning advantage was demonstrated. The main difference is that in $\mathcal{F}_{\bm{\lambda}}$ the unknown observable is no longer a linear combination of Pauli strings, instead each Pauli string in $O(\bm{\lambda})$ is parameterized by a unitary. As we are requiring to only identify the correct concept and considering that the authors in \cite{haah2022optimal} provide an algorithm capable of solving the Hamiltonian Learning problem in polynomial time (for sufficiently small $t$), one might suppose that the classical difficulty associated with the learning advantage presented in this work arises solely from the challenge of evaluating the correct concept rather than from the task of identifying it. This would answer another question left open in \cite{gyurik2023exponential}, at least in the case of concepts labeled by observables with an unknown unitary parameterization. However it must be noted that for the Hamiltonian learning algorithm in \cite{haah2022optimal} to work, the input states $\rho(\bm{x})$ in Eq.(\ref{eq: concept HL}) are of a very particular and simple form and do not come from a distribution $\mathcal{D}$ of $\BQP$-complete quantum states $\rho_{H_{\mathrm{hard}}}(\bm{x})$ as we assumed in our classical hardness result.
For general input distributions, the hardness of identification thus remains elusive. In this regard it is important to note that the task of identification can take various forms. For instance, here we are focusing on identifying the concept within the concept class that generated the data, and in this case, it is generally unclear when this setting allows a classical-efficient solution.
In a more general scenario one could ask if the learning algorithm is able to identify a model from a hypothesis class that differs from the concept class but that it is still capable of achieving the learning condition. Colloquially here the question becomes: can traditional machine learning determine which quantum circuit could accurately label the data, even though classical computers lack the capability to evaluate such quantum circuits. 
In \cite{gyurik2023exponential} it was proven that in this broader context provable speed-ups in identification are no longer possible. This is because the classical algorithm can successfully identify a sophisticated quantum circuit that carries out the entire quantum machine learning process as its subroutine, thereby delegating the learning aspect to the labeling function itself, prior to evaluating a data point. 
Thus the task of identification is indeed only interesting when the hypothesis class is somehow restricted, which is often the case in practically relevant learning scenarios.

\section{Discussion}\label{sec: conclusion}
In this paper, we propose a machine learning task that demonstrates a quantum learning advantage under the assumption that quantum processes cannot be simulated by polynomial-size classical circuits (i.e., $\BQP\not\subseteq\mathsf{P/poly}$). The learning problem involves reproducing a partially unknown observable from classical, measured-out data. While classical learning theory guarantees the computational hardness of the task for input states originating from certain quantum processes, we have developed a quantum learning algorithm capable of solving it for any input distribution. In this discussion, we consider whether this abstract learning problem can reasonably model physically relevant scenarios. We identify three key components for our framework: \textit{\romannumeral 1}) the preparation of a ``hard'' quantum state from a large family, indeed ensuring that $U$ specifies a classically-hard but quantum realizable operation~\footnote{For our purposes, it is sufficient (though not necessary) to have a parameterized heralded state $\rho(x)$ such that measuring an observable $O$ on it produces the function $x\mapsto\mathrm{Tr}[\rho(x)O]$, which cannot be implemented using polynomial-sized circuits.} (see below for details); \textit{\romannumeral 2}) the heralded yet uncontrolled nature of the selection of $U(x)$, reflecting the fact that data is generated under an unknown distribution over $x$ in the PAC learning framework; and \textit{\romannumeral 3}) the task of learning unknown observables itself.

We next discuss the extent to which these three criteria correspond to natural learning tasks. 
\begin{enumerate}[ wide=0pt,label=\textit{\roman*)},labelindent=3mm]
    \item \textbf{Hard state preparation.} 
    While in our proof we deal with the contrived Kitaev Hamiltonian, this is only for technical simplicity. In general, all that is required is a scenario where a family of parametrized ``hard'' states ($\ket{\psi(x)} = U(x)\ket{0}$) is measured, and where there is strong evidence to suggest that computing the map $x\mapsto\bra{\psi(x)}O\ket{\psi(x)}$ (where $x$ is the description of the state in the family) is beyond the capabilities of classical polynomial-size circuits (i.e., not in $\mathsf{P/poly}$ for the decision variant of the problem). Time-evolution scenarios (where $U(x)$ represents the time-evolution of a time-dependent or independent Hamiltonian) and ground-state scenarios (where $U(x)\ket{0}$ is the desired ground state) are natural candidates for such state preparation. To provide the strongest arguments that the resulting map is also classically hard, it is useful to work with Hamiltonians known to be computationally hard (i.e. $\BQP$- or $\mathsf{QMA}$- hard in the context of time-evolution or ground-state problems, respectively). With these conditions in mind, we can still identify numerous previously recognized families of physical Hamiltonians that satisfy these criteria. For time-evolution, to list a few known cases of $\BQP$-complete Hamiltonians and otherwise specified evolutions which give raise to $\BQP$-complete problems: 
    \begin{itemize}
        \item[\footnotesize$\bullet$] Various variants of the Bose-Hubbard model~\cite{childs2013universal}.
        \item[\footnotesize$\bullet$]  Stoquastic Hamiltonians~\cite{janzing2006bqp}.
        \item[\footnotesize$\bullet$] Ferromagnetic Heisenberg model~\cite{childs2013universal}.
        \item[\footnotesize$\bullet$] The $XY$ model~\cite{piddock2015complexity}.
        \item[\footnotesize$\bullet$] Estimating the scattering probabilities in massive field theories~\cite{jordan2018bqp}, (more precisely, estimating the vacuum-to-vacuum transition amplitude, in the presence of spacetime-dependent classical sources, for a massive scalar field theory in $(1+1)$ dimensions).
        \item[$\bullet$] Simulation of topological quantum field theories~\cite{freedman2002simulation}.
    \end{itemize}
     For the ground state version, $\mathsf{QMA}$ hardness results are known for the following Hamiltonians (the first two examples are actually in $\mathsf{MA}$, a subset of $\mathsf{QMA}$): 
     \begin{itemize}
     \item[\footnotesize$\bullet$] Stoquastic local Hamiltonians~\cite{bravyi2006complexity}.
     \item[\footnotesize$\bullet$] Bose-Hubbard with negative hoppings~\cite{bravyi2006complexity}.
     \item[\footnotesize$\bullet$] Bose-Hubbard with poistive hoppings~\cite{childs2014bose}.
     \item[\footnotesize$\bullet$] Antiferromagnetic Heisenberg model~\cite{piddock2015complexity}.
     \item[\footnotesize$\bullet$] The $XY$ model~\cite{piddock2015complexity}.
     \item[\footnotesize$\bullet$] Electronic structure problem~\cite{o2021electronic} for quantum chemistry.
    \item[\footnotesize$\bullet$] Certain supersymmetric quantum mechanical theories~\cite{crichigno2024clique}.
      \end{itemize}
     
   We note that the list above is, to our knowledge, not exhaustive. Focusing on the ground state setting, the computational hardness involved in preparing these states is reflected in our quantum learning algorithm, however, we do not imply that either nature (in the realization of the dataset) or our quantum algorithm is solving $\mathsf{QMA}$-hard problems. In practical scenarios, any $\mathsf{QMA}$ problem can include instances that are solvable within $\mathsf{BQP}$ (as indeed $\BQP$ is contained in $\mathsf{QMA}$, and so all $\BQP$ problems correspond to some instances of $\mathsf{QMA}$-hard problems). For those distributions over the inputs, which are arguably the ones realized in nature, there can be efficient quantum algorithms. Specifically, while the Hamiltonian families listed above define a broader class of physically relevant systems, the instances encountered in real-world scenarios typically fall within easier subclasses that are likely to involve $\mathsf{BQP}$-hard problems. This aligns well with our framework. An example of a setting where the quantum learning algorithm is easy to specify occurs in systems where classical approximations of ground states (such as mean-field solutions, matrix product approximations, etc.) are not exponentially far from the true ground states. These systems, while belonging to $\mathsf{QMA}$-like families, could result in $\BQP$-complete problems (by virtue of the guided Hamiltonian problem construction~\cite{gharibian2022improved}). Other examples of quantumly efficient simulable processes involve Hamiltonians connected by an inverse-polynomially gapped adiabatic path, which can occur in critical Hamiltonians or near phase transition boundaries, with the appropriate scaling of system size.
   In all of these cases, our approach not only provides a natural learning scenario but also an efficient quantum learning algorithm that effectively leverages a guiding state. Finally, as we transition to practical applications, it is important to note that the emphasis on $\BQP$-complete problems is primarily relevant when aiming for formal proofs of asymptotic advantages. In particular, our separation results holds for any quantum process which maps $x\rightarrow\rho(x)$ and for which there exists an observable $O$ such that the function: $x\rightarrow \Tr[\rho(x)O]$ is outside $\mathsf{P\slash poly}$, thus contained in the class $(\mathsf{P\slash poly})^{\mathrm{c}}\cap\BQP$~\footnote{$(\mathsf{P\slash poly})^{\mathrm{c}}$ indicates the complement of the class $\mathsf{P\slash poly}$.} ($\subseteq \BQP$-complete). Many systems of interest may not be $\BQP$-complete but are likely beyond the reach of classical poly-sized circuits. Potential examples may include problems in glassy systems, Hamiltonians related to exotic highly correlated states of matter (e.g., spin liquids), highly correlated molecules (such as those involving heavier metals), and certain areas of high-energy physics, like quantum chromodynamics. While many of these problems may not be strictly $\BQP$-complete, they may remain intractable for classical computers and could potentially be addressed by quantum ones, making them strong candidates for demonstrating quantum advantages in our learning task. We acknowledge, however, that the hardness of these last examples listed above remains purely conjectural and has yet to be more precisely assessed.

\item \textbf{Data distribution: heralding without full control.} 
In the formulation of our learning problems, we assume that the input Hamiltonians, or more generally the resulting states, are drawn from a distribution. While it may not be unusual to have no information about which Hamiltonian is acting on the system, our framework assumes that this information is also revealed in the data point $x$.  Specifically, for our approach to work, we require both that the Hamiltonian is chosen randomly and that we are later informed about which Hamiltonian was selected—essentially heralding without control~\footnote{If the Hamiltonian were fully controlled, we would no longer be in a PAC learning scenario, but something closer to Angluin's query model~\cite{angluin1988queries}. While this setup would also be interesting, the learning objectives would differ.}.
Such scenarios can occur in situations where control is limited, but some type of partial (process) tomography allows the reconstruction of the implemented Hamiltonian afterward. For example, in fields like materials science and condensed matter physics, we frequently do not have full control over the conditions affecting a sample, but we can analyze it afterward. This is particularly true in cases like material doping or the introduction of nitrogen-vacancy (NV) centers. In condensed matter systems, doping introduces additional electrons or holes into a material. However, the precise concentration and spatial distribution of dopants are challenging to control at the atomic scale, leading to variability in the resulting electronic structure~\cite{stavrinadis2016strategies}. Similarly, lattice imperfections, such as strain, dislocations, or defects, perturb key Hamiltonian parameters like hopping amplitudes and interaction strengths, often in unpredictable ways~\cite{faghihnasiri2020study}. Impurities, including unintentional substitutions or vacancies, modify on-site energies and introduce disorder~\cite{khosravian2022impurity}. Collectively, these factors result in effective Hamiltonians that deviate from the intended design, necessitating a statistical approach to account for variability. Such challenges are not unique to condensed matter systems. In quantum optics, heralded operations used to generate entangled photon pairs rely on probabilistic detection events, meaning the effective realized quantum operation depends on external factors such as photon loss or background noise. Similarly, in high-energy physics, the governing Hamiltonian of particle collisions is subject to environmental factors like background radiation or fluctuating particle flux, making precise control impractical. In both cases, experimentalists work within a probabilistic framework, relying on statistical inference rather than attempting direct control of the Hamiltonian. Our model reflects these realities by treating input Hamiltonians as drawn from an underlying distribution. This approach is well-suited to systems with intrinsic or practical variability, where repeated experiments often produce diverse realizations of effective Hamiltonians. By adopting a distributional perspective, we capture the statistical nature of Hamiltonian variability and address learning problems under realistic experimental constraints. Furthermore, since our work requires learnability for any input distribution, rather than focusing on specific distributions as in~\cite{gyurik2023exponential}, we more accurately address the case where the underlying distributions are unknown. This makes our framework further aligned with real-world settings than previous approaches.

\item \textbf{Learning unknown observables.} 
   There are many natural scenarios where the actual measurement performed is not fully characterized or is entirely unknown. One clear example is learning order parameters (as observables) based on, e.g., data labeled using indirect indicators, such as entanglement entropy, which are not simple linear functions of the state. A similar scenario occurs when the order parameter is known, but the corresponding observable cannot be directly implemented on the given device. In these cases, a learning model can infer a proxy for the true order parameter as a linear combination of observables that are feasible to implement.
   Additionally, since time-evolution under a parametrized Hamiltonian can be considered part of the measurement in the case of unitarily parametrized observables, the scenario of partially unknown observables also includes situations where certain aspects of the dynamics are unknown, while the measurement itself remains fixed. Another relevant setting concern experimental contexts where, due to incomplete device characterization, the measurements realized may not align with the intended ones, reflecting a partially unknown scenario. It is important to note that limiting the degree of ignorance in the setting corresponds to restricting or biasing the concept class structure. In this case, learnability still holds (regardless of the prior) along with classical hardness, as long as there remains at least one ``hard'' measurement within the concept class.
   The characterization and learning of observables have been extensively studied, precisely because it represents a plausible scenario, especially within the framework of quantum measurement tomography, which was first introduced in~\cite{luis1999complete}. Specifically, in~\cite{luis1999complete} a setting where unknown observables emerge was identified, when a system interacts with external degrees of freedom before being measured by a fixed observable. This is particularly relevant in cases involving reservoirs or other mechanisms that induce losses and decoherence effects. Subsequent works, such as~\cite{lundeen2009tomography}, used the task of learning unknown measurements to characterize experimental detectors. This task is especially pertinent in quantum computing, where measurement devices are often imperfect due to noise, making it crucial to understand and mitigate the effects of these imperfections on quantum measurements.

\end{enumerate}
Combining meaningful settings from these three classes of assumptions results in more realistic learning tasks that can be captured by our abstract learning framework.

\section*{Acknowledgments}
This publication is part of the project Divide \& Quantum (with project number 1389.20.241) of the research
programme NWA-ORC which is (partly) financed by the Dutch Research Council (NWO). This work was also supported by the Dutch National Growth Fund (NGF), as part of the Quantum Delta NL programme. This work was also supported by the European Union’s Horizon Europe program through the ERC CoG BeMAIQuantum (Grant No. 101124342)

\appendix

\section{Proof of classical hardness}\label{app: classical hardness}
As anticipated in the main text, in order to prove our main Theorem~\ref{thm: separation} we need to show both classical hardness and quantum learnability for the time-evolution learning problem. In this section, we prove Theorem~\ref{lemma: cl hardness} for the time-evolution learning problem, formally described through the concept class $\mathcal{F}^{H,O}_{\mathrm{evolved}}$.
It is important to notice that, as we discussed in the main text, the same proofs can be easily extended for the class $\mathcal{F}^{\mathcal{H},O}_{\mathrm{g. s.}}$ of measurements on ground state of local Hamiltonians through the Kitaev's circuit-to-hamiltonian construction.

\subsection{$\BQP$-completeness of constant time-evolution}

Before stating the proof for Therem~\ref{lemma: cl hardness}, let us prove an useful lemma which guarantees the $\BQP$-completeness of constant time Hamiltonian evolution following the idea in present in~\cite{childs2004quantum}. It is important to notice that the Hamiltonians used in Lemma~\ref{lemma: constant time ev} will not have a constant norm. This is however not a problem for our quantum algorithm as their norms will scale only polynomially with the number of qubits.

\begin{lemma}[Constant time-evolution is $\BQP$-complete]\label{lemma: constant time ev}
For any $k-$gate quantum circuit $U=U_k...U_2U_1$ which acts on $n$ qubits there exists a local Hamiltonian $H$ such that for any $n$ qubit inital state $\ket{\psi}$ :
\begin{equation}\label{eq: perfect evolution}
    e^{iHt}\ket{\psi}\ket{0}=U_k...U_2U_1\ket{\psi}\ket{k}
\end{equation}
for $t=\pi$.

\begin{proof}
    Consider the Feyman clock Hamiltonian \cite{feynman1985quantum,nagaj2010fast}:
    \begin{equation}\label{eq: feynmann ham}
     H:=\sum_{j=1}^k H_j
    \end{equation}
    where:
    \begin{equation}
        H_j= U_j\otimes\ket{j}\bra{j-1}+U^\dagger_j\otimes\ket{j-1}\bra{j}
    \end{equation}
notice that using the common unary encoding for the clock register, such Hamiltonian is 4 local \cite{nagaj2010fast}. 
Furthermore, the Hamiltonian $H$ acts on two different registers: the first register (``work register'') consists of $n$ qubit and it will store the computation of the circuit $U$, the second register (``clock register'') contains $k+1$ qubit and it acts as a counter which records the progress of the computation. Now, if we evolve an initial state $\ket{\psi_0}=\ket{\psi}\ket{0}$ under the Hamiltonian $H$ the evolved state will be in the space spanned by the $k+1$ states $\{\ket{\psi_j}=U_j...U_2U_1\ket{\psi}\ket{j}\}_{j=0}^{k}$.
After letting the system evolve for a time $\tau$, if now we measure the clock register and obtain a value $L$ then the work register will exactly contains the computation of the quanutm circuit $U$ on the initial state $\ket{\psi}$. This of course will happen only with a certain probability and while there are many ways to boost such probabilities of getting the desired final state $\ket{\psi_k}=U_k...U_2U_1\ket{\psi}\ket{k}$, the idea in~\cite{childs2004quantum} is to modify the Feynman's Hamiltonian in Eq.~(\ref{eq: feynmann ham}) in order to make the evolution perfect as in Eq.~(\ref{eq: perfect evolution}).

Notice first that in the subspace spanned by the vectors $\{\ket{\psi_j}\}_{j=0}^{k}$ the non zero entries of the matrix $H$ are 
\begin{equation}\label{eq: H matrix}
    \bra{\psi_j}H\ket{\psi_{j\pm 1}}=1
\end{equation}
We now follow the idea in~\cite{childs2004quantum} and modify the Hamiltonian in~(\ref{eq: feynmann ham}) in the following way:
\begin{equation}
    H':=\sum_{j=1}^k \sqrt{j(k+1-j)}\;H_j
\end{equation}
The idea in~\cite{childs2004quantum} is to associate each state in $\{\ket{\psi_j}\}_{j=0}^k$ to a quantum system of total angular momentum $\frac{k}{2}(\frac{k}{2}-1)$ with $z$ component $j-\frac{k}{2}$. The association is possible by the fact that a system with total angular momentum $\frac{k}{2}(\frac{k}{2}-1)$ will allow $k+1$ states with $z$ components of values $-\frac{k}{2}, -\frac{k}{2}+1,...,\frac{k}{2}-1,\frac{k}{2}$. These are exactly the $k+1$ states in $\{\ket{\psi_j}\}_{j=0}^k$ and it is possible to move among them defining the corresponding ladder operators :
\begin{align}\label{eq: ladders}
    L_-\ket{\psi_j} &= \sqrt{j(k+1-j)}\ket{\psi_{j-1}}\\
    L_+\ket{\psi_j} &= \sqrt{(k-j)(j+1)}\ket{\psi_{j+1}}
\end{align}
By the algebra of the angular momentum, the $x$ component of the total angular momentum $J_x$ can be expressed as
\begin{equation}
    J_x=\frac{1}{2}(L_+ + L_-)
\end{equation}
Comparing Eq.(\ref{eq: H matrix}) with Eq. (\ref{eq: ladders}) it is clear that $H'$ is exactly the $x$ component of the angular momentum operator defined over the states $\{\ket{\psi_j}\}_{j=0}^k$. As $J_x$ rotates between the states with $z$ component $\pm \frac{k}{2}$ in time $t=\pi$, this concludes the proof.

\end{proof}
    
\end{lemma}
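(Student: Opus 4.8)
The plan is to encode the entire circuit into a single time-independent evolution by appending a ``clock'' register and engineering perfect state transfer along a one-dimensional chain of computational snapshots. First I would introduce the Feynman clock Hamiltonian $H=\sum_{j=1}^k H_j$ with $H_j=U_j\otimes\ket{j}\bra{j-1}+U_j^\dagger\otimes\ket{j-1}\bra{j}$, using a unary encoding of the clock so that each $H_j$ stays geometrically local. The structural fact I would exploit is that the \emph{history subspace} spanned by the states $\ket{\psi_j}:=U_j\cdots U_1\ket{\psi}\otimes\ket{j}$, $j=0,\dots,k$, is invariant under $H$: each $H_j$ couples only $\ket{\psi_{j-1}}$ and $\ket{\psi_j}$, and one checks $\bra{\psi_{j-1}}H\ket{\psi_j}=1$ with vanishing diagonal. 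Thus, restricted to this subspace, $H$ is the adjacency matrix of a path graph on $k+1$ vertices. Since the initial state $\ket{\psi}\ket{0}=\ket{\psi_0}$ lies in this subspace and the subspace is invariant, the entire evolution can be analyzed there.

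The obstruction is that a continuous-time walk on the \emph{uniform} path never transfers all amplitude from one endpoint to the other, so the bare Feynman Hamiltonian does not realize the circuit exactly at any fixed time. To repair this I would follow the idea of~\cite{childs2004quantum} and rescale the couplings, replacing $H$ by $H'=\sum_{j=1}^k c_j\,H_j$ with $c_j\propto\sqrt{j(k+1-j)}$. With the overall constant fixed appropriately, $H'$ restricted to the history subspace becomes exactly the operator $J_x=\tfrac12(L_++L_-)$ of a single spin-$k/2$, under the identification of $\ket{\psi_j}$ with the $J_z$-eigenstate of magnetic quantum number $m=j-\tfrac{k}{2}$. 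Indeed, the ladder relations $L_-\ket{\psi_j}=\sqrt{j(k+1-j)}\,\ket{\psi_{j-1}}$ and $L_+\ket{\psi_j}=\sqrt{(k-j)(j+1)}\,\ket{\psi_{j+1}}$ reproduce precisely the reweighted off-diagonal entries of $H'$.

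To finish, I would invoke the $SU(2)$ rotation structure. The operator $e^{i\pi J_x}$ is a $\pi$-rotation about the $x$-axis, which interchanges the extreme weights, sending the lowest-weight state $m=-\tfrac{k}{2}$ to the highest-weight state $m=+\tfrac{k}{2}$ up to a global phase. Translating back through the identification, $\ket{\psi_0}=\ket{\psi}\ket{0}$ is the $m=-\tfrac{k}{2}$ state and $\ket{\psi_k}=U\ket{\psi}\ket{k}$ is the $m=+\tfrac{k}{2}$ state, so $e^{i\pi H'}\ket{\psi}\ket{0}\propto U\ket{\psi}\ket{k}$, which is the claimed identity. I would also record that the rescaling only multiplies each local term by a scalar, so $H'$ remains local and its norm grows at most polynomially in $k$, which the surrounding argument already allows.

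\textbf{Main obstacle.} The step demanding the most care is the exact calibration of the reweighted chain to the angular-momentum picture: one must fix the overall normalization of the $c_j$ so that the evolution time $t=\pi$ corresponds to precisely a half-turn of the spin, and then control the global phase produced by that rotation so the stated equality holds exactly rather than merely up to a phase. Both are pinned down by the Wigner matrix element $d^{k/2}_{m',\,m}(\pi)$, which is supported only on $m'=-m$ and thereby certifies perfect transfer; any leftover phase is absorbed into a global convention.
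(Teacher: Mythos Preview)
Your proposal is correct and follows essentially the same route as the paper: introduce the Feynman clock Hamiltonian, observe that on the history subspace it acts as the adjacency matrix of a path, then reweight the hopping terms by $\sqrt{j(k+1-j)}$ so that the restricted Hamiltonian becomes $J_x$ for a spin-$k/2$ system, and conclude via the $\pi$-rotation about the $x$-axis. If anything, you are slightly more explicit than the paper about the global phase and about why the unweighted chain fails, but the overall argument is the same.
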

\subsection{Proof of Theorem~\ref{lemma: cl hardness}}

The second ingredient in demonstrating the classical hardness of our learning task, in addition to the $\BQP$-completeness of time-evolution, is a result from classical learning theory contained in~\cite{schapire1990strength}. Here, we will state this result in Lemma~\ref{lemma: shapire} and, for clarity and didactical purpose, provide an intuitive explanation of its proof.

\begin{lemma}[Learnability implies evaluation - Theorem 7 in ~\cite{schapire1990strength}]\label{lemma: shapire}
    Suppose $\calF$ is learnable in the sense of Def.~\ref{def:pac} with input space $\calX_n=\{0,1\}^n$. Then there exists a polynomial $p$ such that for all the concepts $f\in\calF$ of size $s$, there exists a classical circuit of size $p(n,s)$ exactly computing $f$ 
\end{lemma}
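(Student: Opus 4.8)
The plan is to turn distribution-free PAC learnability into a \emph{boosting} argument that manufactures, non-uniformly, a single polynomial-size circuit agreeing with $f$ on every one of the $2^n$ inputs. First I would note that because the learner $A$ of Def.~\ref{def:pac} runs in time $\mathrm{poly}(1/\epsilon,1/\delta,n)$, any hypothesis it outputs has a polynomial-size description and is polynomial-time evaluable; hence each individual hypothesis is itself computed by a circuit of size $\mathrm{poly}(n,s)$. Running $A$ at constant accuracy, say $\epsilon=1/4$ and $\delta=1/4$, and thresholding its real-valued output to a Boolean value (legitimate here because the concepts used in the hardness construction have $\{0,1\}$-valued outputs), converts $A$ into a \emph{weak learner}: for every distribution $D$ it returns, with probability at least $3/4$, a poly-size hypothesis whose disagreement with $f$ under $D$ is at most $1/4$.

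The key step is the boosting reduction, in the spirit of Schapire's own construction in~\cite{schapire1990strength}. Fix the uniform distribution $U$ on $\{0,1\}^n$ and run $T$ rounds: in round $t$ one reweights the inputs toward the points where the current combined hypothesis still errs, calls the weak learner on the reweighted distribution $D_t$ (simulated by reweighted/rejection sampling from $U$), and obtains $h_t$ with error at most $1/4$ under $D_t$. The standard analysis shows that the majority vote $H=\mathrm{maj}(h_1,\dots,h_T)$ has error under $U$ decaying exponentially in $T$, so after $T=O(n)$ rounds $\Pr_{\bm{x}\sim U}[H(\bm{x})\neq f(\bm{x})] < 2^{-n}$. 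Since $U$ assigns mass exactly $2^{-n}$ to each point of $\{0,1\}^n$, an error strictly below $2^{-n}$ forces $H(\bm{x})=f(\bm{x})$ for \emph{every} $\bm{x}$, i.e. $H$ computes $f$ exactly.

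It then remains only to read off the size. The object $H$ is a majority over $T=O(n)$ hypotheses, each of size $\mathrm{poly}(n,s)$, so $H$ is a circuit of size $\mathrm{poly}(n,s)$, yielding the claimed polynomial $p$. The non-uniformity in the statement is precisely what lets us ignore the learner's failure probability: we fix one run of the randomized boosting procedure on which every call to the weak learner succeeds --- such a run exists with positive probability by a union bound over the $O(n)$ rounds --- and hard-wire the resulting hypotheses and weights into the circuit.

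The main obstacle is the bookkeeping of the boosting reduction rather than any single deep idea. One must verify that the reweighted distributions $D_t$ can be presented to the PAC learner as i.i.d. samples, and this is exactly where \emph{distribution-free} learnability is essential: the learner must succeed on the adversarially reweighted $D_t$, not merely on a benign distribution. One must also check that polynomially many samples per round suffice, so that the aggregate sample and time complexity stay polynomial in $n$ and $s$. A secondary point is the reduction from the real-valued learning condition of Def.~\ref{def:pac} to a Boolean weak learner, which is harmless here because the relevant concepts are $\{0,1\}$-valued and thresholding preserves a constant agreement rate.
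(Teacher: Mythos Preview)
Your proposal is correct and follows essentially the same route as the paper's proof sketch: both invoke Schapire's boosting result to drive the error below $2^{-n}$ under the uniform distribution on $\{0,1\}^n$, forcing the boosted hypothesis to agree with $f$ on every input, and both rely on the crucial fact that the boosted hypothesis has size $\mathrm{poly}(n,s,\log(1/\epsilon))$ (equivalently, $T=O(n)$ weak hypotheses of size $\mathrm{poly}(n,s)$), so that setting $\epsilon\le 2^{-n}$ keeps the circuit polynomial. The paper cites this size bound as a black box from~\cite{schapire1990strength}, whereas you unpack the boosting mechanics explicitly; the content is the same.
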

\begin{proof}[Proof sketch]
    In Lemma~\ref{lemma: shapire}, the size $s$ of a concept refers to the length of its representation under some encoding scheme. For example, in the case of our concept class $\calF_{\mathrm{evolved}}^{H_{\mathrm{hard}},O}$, $s\sim\mathrm{poly}(n)$ as our concepts can be implemented by polynomial size quantum circuits. The key result underlying Lemma~\ref{lemma: shapire} is another finding from~\cite{schapire1990strength}, which states the following. If the concept class $\calF$ is learnable in the sense of Def.~\ref{def:pac}, then there exists a learning algorithm that, for any target concept $f\in\calF$, produces a hypothesis implementable by a circuit of size $\mathrm{poly}(n,s,\log(1/\epsilon))$, which is $\epsilon$-close to the target concept in the sense of Eq.\ref{eq:real pac}. This remarkable result is achieved through a smart application of boosting theory. The proof of Lemma~\ref{lemma: shapire} then follows by applying the boosted learning algorithm to a training set containing all input points $x\in\{0,1\}^n$ and requiring an exponential precision $\epsilon\leq\frac{1}{2^n}$.
\end{proof}

We are now ready to state the proof of Theorem~\ref{lemma: cl hardness} on the existence of a Hamiltonian $H_{\mathrm{hard}}$ such that the concept class $\calF_{\mathrm{evolved}}^{H_{\mathrm{hard}},O}$ is not classically learnable in the sense of Def.~\ref{def:learning}.

\hardness*
\begin{proof}
   Let $\{U^n_{BQP}\}_n$ be a family of quantum circuits which decides an arbitrary $\BQP$ language $\mathcal{L}$, one circuit per size. Precisely, according to the definition of $\BQP$ in Def.~\ref{def:BQP}, this implies that for any $\bm{x}\in\{0,1\}^n$ measuring the first qubit in the computational basis on the state $U^n_{BQP}\ket{\bm{x}}$ will output 1 (or 0) with probability greater that 2/3 if $\bm{x}\in\mathcal{L}$ ($\bm{x}\not\in\mathcal{L}$). It is obvious that the the quantum model $f^{O'}=\text{Tr}[\rho(\bm{x})O']$ correctly decides every $\bm{x}\in\mathcal{L}$ if $\rho(\bm{x})=U^n_{BQP}\ket{\bm{x}}\bra{\bm{x}}(U^n_{BQP})^\dagger$ and $O'=Z\otimes \underbrace{I\otimes ...\otimes I}_{n-1}$. In fact measuring the first qubit of $U^n_{BQP}\ket{\bm{x}}$:
    \begin{enumerate}
    \item For all $\bm{x}\in \mathcal{L}$, $f^{O'}(\bm{x})=$ Pr[the output of $U^n_{BQP}$ applied on the input $\ket{\bm{x}}$ is $1$] - Pr[the output of $U^n_{BQP}$ applied to the input $\ket{\bm{x}}$ is $0$] $\geq 2/3-1/3=1/3$.
    \item For all $\bm{x}\notin \mathcal{L}$, $f^{O'}(\bm{x})=$ Pr[the output of $U^n_{BQP}$ applied on the input $\ket{\bm{x}}$ is $1$] - Pr[the output of $U^n_{BQP}$ applied to the input $\ket{\bm{x}}$ is $0$] $\leq 1/3-2/3=-1/3$.
\end{enumerate}
  Therefore, as $f^{O'}(\bm{x})>0$ if $\bm{x}\in \mathcal{L}$ and $f^{O'}(\bm{x})<0$ if $\bm{x}\notin \mathcal{L}$ such quantum model could efficiently decide the language $\mathcal{L}$.
 By Lemma~\ref{lemma: constant time ev}, for every $n$ there exist a 4-local Hamiltonian $H_{\mathrm{\mathrm{hard}}}$ such that evolving the state $\ket{\bm{x}}$ for a time $t=\pi$ under will produce the state $U^n_{\BQP}\ket{\bm{x}}$ (on the work register). 
  Therefore, the concept class $\calF_{\mathrm{evolved}}^{H_{\mathrm{hard}},O}$ contains a concept $f^{\bm{\alpha'}}$, where $\bm{\alpha'}$ is defined such that $O(\bm{\alpha'}) = O'$, which correctly decides $\calL$. The final step of the proof lies in noticing that by Lemma~\ref{lemma: shapire}, if $\calF_{\mathrm{evolved}}^{H_{\mathrm{hard}},O}$ is learnable in the sense of Def.~\ref{def:learning}, then there must exists a polynomial size classical circuit which evaluates $f^{\bm{\alpha'}}$ correctly on every $\bm{x}\in\{0,1\}^n$. As $f^{\bm{\alpha}}$ decides the $\BQP$ language $\calL$, and for any $\BQP$ language we can construct such Hamiltonian $H_{\mathrm{hard}}$, this implies that $\BQP\subseteq\mathsf{P\slash poly}$.

\end{proof}

\section{Proof of quantum learnability}\label{app: quantum learnability}
In this Section we prove that Algorithm~\ref{algo:quantum} satisfies the learning condition in Def.~\ref{def:learning} using only polynomial resources in both time and samples. As a first step, notice that the quanutm states $\rho_{H_{\mathrm{hard}}}(\bm{x})$ in Eq. (\ref{eq:concept class}) of $\mathcal{F}^{H_{\mathrm{hard}},O}_{\mathrm{evolved}}$ are easily preparable on a quantum computer. This is due to the fact that the Hamiltonians $H_{\mathrm{hard}}$ used to establish our hardness result in Theorem~\ref{lemma: cl hardness} are local, allowing their time-evolution to be efficiently simulated on a quantum computer.
Since the quantum algorithm can efficiently prepare time evolved states of local Hamiltonians, the quantum states $\rho_{H_{\mathrm{hard}}}(\bm{x})=U\ket{\bm{x}}\bra{\bm{x}}U^\dagger$ with $U=e^{iH_{\mathrm{\mathrm{hard}}}\pi}$ can efficiently be prepared on a quantum computer. In order to demonstrate the quantum learnability of $\mathcal{F}^{H_{\mathrm{hard}},O}_{\mathrm{evolved}}$, all that remains is to rigorously bound the sample and time complexity of Algorithm $\ref{algo:quantum}$ in the following Theorem \ref{thm:quantum algo bounds}. As the bound obtained on the sample complexity derives from the generalization bound of the LASSO algorithm \cite{mohri2018foundations,lewis2024improved}, we first repeat this result.

\begin{theorem}[Theorem 11.16 in \cite{mohri2018foundations}]\label{thm:lasso}
    Let $\mathcal{X} \subseteq \mathbb{R}^{A}$ and $\mathcal{C} = \{\bm{x}\in \mathcal{X} \mapsto \vec{w} \cdot \bm{x}: \|\vec{w}\|_1 \leq B\}$. Let $\mathcal{S} = ((\bm{x}_1, y_1), \dots, (\bm{x}_N, y_N)) \in (\mathcal{X} \times \mathcal{Y})^N$. Let $\mathcal{D}$ denote a distribution over $\mathcal{X} \times \mathcal{Y}$ according to which the training data $\mathcal{S}$ is drawn. Assume that there exists $r_{\infty} > 0$ such that for all $\bm{x}\in \mathcal{X}$, $\|\bm{x}\|_{\infty} \leq r_{\infty}$ and $M > 0$ such that $|h(\bm{x}) - y| \leq M$ for all $(\bm{x}, y) \in \mathcal{X} \times \mathcal{Y}$. Then, for any $\delta > 0$, with probability at least $1 - \delta$, each of the following inequalities holds for all $h \in \mathcal{C}$:
\begin{align}
&\mathbb{E}_{(\bm{x},y) \sim \mathcal{D}} [|h(\bm{x}) - y|^2] := R(h) \\ &\leq \hat{R}_{\mathcal{S}}(h) + 2r_{\infty}B M\sqrt{2 \log(2|A|) N} + M \sqrt{2\log(1/\delta)/{2N}} \notag
\end{align}
where $R(h)$ is the prediction error for the hypothesis $h$ and $\hat{R}_{\mathcal{S}}(h)$ is the training error of $h$ on the training set $\mathcal{S}$.

\end{theorem}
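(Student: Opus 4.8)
The plan is to recognize Theorem~\ref{thm:lasso} as an instance of the standard Rademacher-complexity generalization bound, specialized to the squared loss and to $\ell_1$-constrained linear predictors, and to assemble it from three classical ingredients. First I would introduce the induced loss class $\mathcal{G} = \{(\bm{x},y) \mapsto |h(\bm{x}) - y|^2 : h \in \mathcal{C}\}$, whose elements are bounded by $M^2$ under the hypothesis $|h(\bm{x}) - y| \le M$. The generic two-sided uniform deviation bound --- obtained by applying McDiarmid's inequality to $\sup_{g \in \mathcal{G}}\big(\bbE[g] - \hat{\bbE}_S[g]\big)$ and then symmetrizing --- yields that with probability at least $1-\delta$, simultaneously for every $h \in \mathcal{C}$,
\begin{equation}
R(h) \le \hat{R}_S(h) + 2\,\frakR_N(\mathcal{G}) + M^2\sqrt{\frac{\log(1/\delta)}{2N}},
\end{equation}
where $\frakR_N(\mathcal{G})$ is the (expected) Rademacher complexity of $\mathcal{G}$. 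This reduces the whole problem to estimating $\frakR_N(\mathcal{G})$.

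Second, I would peel off the squared loss using Talagrand's contraction (Ledoux--Talagrand) lemma. On the range where $|u - y| \le M$, the map $u \mapsto (u-y)^2$ has derivative bounded by $2M$ in absolute value, hence is $2M$-Lipschitz in its first argument; the contraction lemma then transfers the complexity from the loss class to the linear hypothesis class, giving $\frakR_N(\mathcal{G}) \le 2M\,\frakR_N(\mathcal{C})$.

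Third, I would bound $\frakR_N(\mathcal{C})$ for the $\ell_1$-ball of radius $B$. Writing the empirical average as $\frac{1}{N}\,\bbE_{\bm{\sigma}}\sup_{\|\vec{w}\|_1 \le B}\sum_\ell \sigma_\ell(\vec{w}\cdot\bm{x}_\ell)$ and applying H\"older's inequality in the dual pair $(\ell_1,\ell_\infty)$, the supremum equals $\frac{B}{N}\,\bbE_{\bm{\sigma}}\big\|\sum_\ell \sigma_\ell \bm{x}_\ell\big\|_\infty$. The $\ell_\infty$ norm is a maximum over the $2|A|$ signed coordinate directions of a Rademacher sum whose squared norm is at most $N r_\infty^2$, so Massart's finite-class lemma delivers $\frakR_N(\mathcal{C}) \le r_\infty B \sqrt{2\log(2|A|)/N}$. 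Chaining the three estimates and substituting the loss bound into the concentration term reproduces the claimed inequality, up to the bookkeeping of numerical constants fixed by the precise forms of the contraction and Massart lemmas adopted in the cited reference.

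The main obstacle I anticipate is not any individual step --- each is textbook --- but keeping the constants and the effective loss range consistent across the chain. In particular one must verify that the $2M$-Lipschitz constant for the squared loss is valid over the \emph{entire} class (which needs the uniform bound $|h(\bm{x})-y| \le M$ to hold for all $h \in \mathcal{C}$, not merely at the minimizer), and one must reconcile the bound actually used on $g \in \mathcal{G}$ with the exponent of $M$ appearing in the stated concentration term. These are precisely the places where an otherwise routine argument can produce an off-by-a-factor constant, so aligning normalizations with the conventions of~\cite{mohri2018foundations} is where I would concentrate the care.
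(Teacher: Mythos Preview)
The paper does not prove this theorem at all: it is quoted verbatim as Theorem~11.16 of~\cite{mohri2018foundations} and used as a black box in the proof of Theorem~\ref{thm:quantum algo bounds}. Your outline---McDiarmid for uniform deviation, Talagrand contraction to peel off the $2M$-Lipschitz squared loss, then H\"older duality plus Massart's finite-class lemma for the $\ell_1$-ball Rademacher complexity---is precisely the standard argument that the cited reference uses, so there is nothing to compare against in the present paper. Your closing caveat about the exponent of $M$ in the concentration term is well taken: the bounded-differences step naturally produces $M^2$ rather than the $M$ appearing in the stated inequality, and the middle term as printed has $\sqrt{2\log(2|A|)N}$ where one expects $\sqrt{2\log(2|A|)/N}$; both are typographical slips in the transcription from~\cite{mohri2018foundations} rather than flaws in your reasoning.
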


We are now ready to state Theorem \ref{thm:quantum algo bounds} (a rigorous formulation of Lemma~\ref{lemma: q easiness} in the main text) which provides precise guarantees on the number of samples and the time complexity required by Algorithm \ref{algo:quantum}.
\begin{theorem}[Lemma~\ref{lemma: q easiness} in the main text]\label{thm:quantum algo bounds}
    Given $n$, $\delta> 0$, $ \frac{1}{e} >\epsilon > 0$ and a training data set $\mathcal{T}^{\bm{\alpha}}_N=\{(\bm{x}_\ell,y_\ell)\}_{\ell=1}^N$ of size
    \begin{equation}
        N=\mathcal{O}\left(\frac{\log(\mathrm{poly}(n)/\delta)\mathrm{poly}(n)}{\epsilon^2}\right)
    \end{equation}
    where $\bm{x}_\ell$ is sampled from an unknown distribution $\mathcal{D}$ over $\bm{x}\in\{0,1\}^n$ and $|y_\ell-\text{Tr}(O({\bm{\alpha}})\rho(\bm{x}_\ell))|\leq \epsilon$ for any geometrically local observable $O({\bm{\alpha}})=\sum_{i=1}^m {\bm{\alpha}}_i P_i$ with $P_i \in \{I,X,Y,Z\}^{\otimes n}$ and ${\bm{\alpha}}\in [-1,1]^m$, and $\rho(\bm{x})=U(\bm{x})\ket{0}\bra{0}U^\dagger(\bm{x})$. Then there exists a quantum algorithm $\mathcal{A}_q(\mathcal{T}^\alpha_N,\bm{x})=h(\bm{x})$ such that:
    \begin{equation}
        \mathbb{E}_{x\sim\mathcal{D}}|h(\bm{x}_l)-\text{Tr}[O({\bm{\alpha}})\rho(\bm{x})]|^2\leq\epsilon
    \end{equation}
    with probability at least $1-\delta$. The computational time of the quantum algorithm is bounded in $\mathcal{O}(\mathrm{poly}(n)N)$
\end{theorem}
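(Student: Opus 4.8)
The plan is to instantiate the LASSO generalization bound of Theorem~\ref{thm:lasso} on the feature map constructed by Algorithm~\ref{algo:quantum}, exploiting the fact that the target concept is \emph{exactly linear} in this map. Writing $\phi(\bm{x})=(\text{Tr}[\rho(\bm{x})P_1],\dots,\text{Tr}[\rho(\bm{x})P_m])\in[-1,1]^m$ for the vector of exact Pauli expectation values, one has $f^{\bm{\alpha}}(\bm{x})=\text{Tr}[\rho(\bm{x})O(\bm{\alpha})]=\bm{\alpha}\cdot\phi(\bm{x})$. Since $\bm{\alpha}\in[-1,1]^m$ with $m=\mathrm{poly}(n)$, I would fix the hyperparameter $B=m=\mathrm{poly}(n)$, so that $\|\bm{\alpha}\|_1\leq B$ and the target vector $\bm{\alpha}$ is itself feasible in the optimization~(\ref{eq: regression}). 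This realizability is the key point: evaluated at $\bm{\alpha}$ on the exact features, the empirical risk is at most $\epsilon_2^2\leq\epsilon^2$ because $|f^{\bm{\alpha}}(\bm{x}_\ell)-y_\ell|\leq\epsilon_2$ for every sample.

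First I would treat the idealized setting in which the exact features $\phi$ are available. There Theorem~\ref{thm:lasso} applies with $r_\infty\leq 1$, constraint $B=\mathrm{poly}(n)$ and prediction bound $M\leq B r_\infty+1=\mathrm{poly}(n)$, and since the minimizer obeys $\hat{R}_S(w^*)\leq\hat{R}_S(\bm{\alpha})\leq\epsilon_2^2$ it yields
\begin{equation}
\mathbb{E}_{(\bm{x},y)\sim\mathcal{D}}\big[|w^*\cdot\phi(\bm{x})-y|^2\big]\leq \epsilon_2^2 + 2r_\infty B M\sqrt{\tfrac{2\log(2m)}{N}} + M\sqrt{\tfrac{\log(1/\delta)}{N}}.
\end{equation}
As $r_\infty,B,M$ are all $\mathrm{poly}(n)$, the two complexity terms collapse to $\mathrm{poly}(n)\sqrt{(\log m+\log(1/\delta))/N}$, which is forced below any prescribed constant fraction of $\epsilon$ exactly by the claimed sample size $N=\mathcal{O}\!\big(\mathrm{poly}(n)\log(\mathrm{poly}(n)/\delta)/\epsilon^2\big)$. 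An $L^2$ triangle inequality together with $|y-f^{\bm{\alpha}}(\bm{x})|\leq\epsilon_2$ then converts this into $\mathbb{E}_{\bm{x}\sim\mathcal{D}}|w^*\cdot\phi(\bm{x})-f^{\bm{\alpha}}(\bm{x})|^2\leq\epsilon$, once the internal precision $\epsilon_2$ and the complexity term are each taken to be suitable constant fractions of $\epsilon$.

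The main obstacle, and the part demanding the most care, is that Algorithm~\ref{algo:quantum} never sees $\phi$ but only coordinate-wise $\epsilon_1$-accurate estimates $\hat\phi(\bm{x})$, both during training and when the output model is later evaluated on fresh inputs. I would absorb this through a single perturbation bound: for any $w$ with $\|w\|_1\leq B$,
\begin{equation}
|w\cdot\hat\phi(\bm{x})-w\cdot\phi(\bm{x})|\leq\|w\|_1\,\|\hat\phi(\bm{x})-\phi(\bm{x})\|_\infty\leq B\epsilon_1 .
\end{equation}
Choosing $\epsilon_1=\mathcal{O}(\epsilon/\mathrm{poly}(n))$ so that $B\epsilon_1=\mathcal{O}(\epsilon)$ guarantees simultaneously that (i) $\bm{\alpha}$ retains empirical risk $\mathcal{O}((\epsilon_2+B\epsilon_1)^2)=\mathcal{O}(\epsilon)$ on the \emph{estimated} features, so the LASSO minimizer does too, even accounting for the $\epsilon_2$ optimization slack permitted in Algorithm~\ref{algo:quantum}; and (ii) the model actually output, $h(\bm{x})=w^*\cdot\hat\phi(\bm{x})$, differs from $w^*\cdot\phi(\bm{x})$ by at most $B\epsilon_1$ pointwise. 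Feeding both $\mathcal{O}(\epsilon)$ shifts into the $L^2$ triangle inequality of the previous paragraph then delivers $\mathbb{E}_{\bm{x}\sim\mathcal{D}}|h(\bm{x})-f^{\bm{\alpha}}(\bm{x})|^2\leq\epsilon$.

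Finally I would track resources. Each of the $m=\mathrm{poly}(n)$ Pauli expectations is estimated to precision $\epsilon_1$ from $\mathcal{O}(\log(1/\delta')/\epsilon_1^2)$ copies of the efficiently preparable state $\rho(\bm{x}_\ell)$ via a Hoeffding bound, and a union bound over the $N$ points and $m$ Paulis makes every estimate simultaneously $\epsilon_1$-accurate with probability $1-\delta/2$, the remaining $\delta/2$ being spent on the LASSO bound. The convex program~(\ref{eq: regression}) over $m=\mathrm{poly}(n)$ variables is solvable in $\mathrm{poly}(m,N)$ time, and state preparation plus measurement costs $\mathrm{poly}(n)$ per sample, giving total running time $\mathcal{O}(\mathrm{poly}(n)\,N)$, as claimed.
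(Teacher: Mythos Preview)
Your proposal is correct and follows essentially the same approach as the paper: instantiate the LASSO generalization bound of Theorem~\ref{thm:lasso} on the Pauli feature map, use realizability of $\bm{\alpha}$ to bound the empirical risk, absorb the finite-sample estimation error in the features through the H\"older-type perturbation $\|w\|_1\|\hat\phi-\phi\|_\infty\leq B\epsilon_1$ with $\epsilon_1=\mathcal{O}(\epsilon/\mathrm{poly}(n))$, and read off the sample and time complexity. Your ``idealize-then-perturb'' presentation is somewhat cleaner than the paper's, which works directly with the estimated features and three error parameters $\epsilon_1,\epsilon_2,\epsilon_3$, but the ingredients and bounds are the same; one minor slip is your $M\leq Br_\infty+1$, since the labels $y_\ell$ can themselves be of order $m$ rather than $1$, but this only changes the constant inside $M=\mathrm{poly}(n)$ and does not affect the argument.
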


\begin{proof}
    The proof of the theorem is based on the well known bound on the prediction error of the LASSO algorithm of Theorem \ref{thm:lasso}. Consider the algorithm described in Algorithm \ref{algo:quantum}. First, we demonstrate now that Algorithm \ref{algo:quantum} satisfies the condition of Theorem 6. In our setting, the input space of the learned model $h$ is $\mathcal{X}=[-1,1]^m\subset\mathbb{R}^m$ as we consider $h$ a function of the $m-$dimensional feature vector $\phi(\bm{x})$. Clearly, $||\phi(\bm{x})||_{\infty}\leq 1$ for all $\bm{x}\in\mathcal{X}$. The hypothesis class $\mathcal{C}$ of our algorithm is given by the set of functions of the same form of the learned $h$, i.e. $\mathcal{C}=\{\phi(\bm{x})\in\mathcal{X} \rightarrow w\cdot \phi(\bm{x}) : ||w||_1\leq B\}$ with $B=\mathrm{poly}(n)$. With respect to Theorem 6, we can also choose $M= \mathrm{poly}(n)$ so that $|h(\bm{x}_l)-y_l|<M$ for all $l=1, ..., N$:
    \begin{align}
        |h(\bm{x}_l)-y_l|&\leq|w\cdot\phi(\bm{x}_l)|+|y_l|\leq||w||_1||\phi(\bm{x}_l)||_{\infty}+2\\&\leq \text{poly}(n)\notag
\end{align}
Where the second inequalitiy follows by H\"{o}lder's inequality.
By Theorem~\ref{thm:lasso} then, the bound on the prediction error $R(h)$ of the learned model $h(\bm{x})=w^*\phi(\bm{x})$ is
\begin{equation}\label{gen_error}
    R(h)\leq \hat{R}(h) + 2BM\sqrt{\frac{2\log(2m)}{N}}+M^2\sqrt{\frac{\log\delta^{-1}}{2N}}
\end{equation}
where $\hat{R}(h)$ is the training error on the dataset $T^{\bm{\alpha}}_N$.

We now bound the training error $\hat{R}(h)$. Let $\epsilon_1$ be the maximum sampling error associated to the measurement of a Pauli observable on $\rho(\bm{x}_\ell)$, i.e.
\begin{equation}
    \epsilon_1 = \max_{\substack{i\in[1,...,m]\\ \bm{x}\in\{0,1\}^n}}\;\;|\mathrm{Tr}[\rho(\bm{x})P_i]-\phi(\bm{x}_i)|
\end{equation}
Let also $\epsilon_2$ the maximum sampling error associated to $y_\ell$, such that:
    
    \begin{equation}
        |y_\ell - \text{Tr}[O({\bm{\alpha}})\rho(\bm{x}_l)]|\leq \epsilon_2\;\;\;\;\; \forall (\bm{x}_\ell,y_\ell)\in T^{\bm{\alpha}}_N
    \end{equation}

We can now derive a bound on the training error for the optimal value of $w_{opt}={\bm{\alpha}}$ of the model $h_{opt}(\bm{x})={\bm{\alpha}}\cdot\phi(\bm{x})$:
\begin{align}
    \hat{R}(h_{opt})&=\frac{1}{N}\sum_{\ell=1}^N |h_{opt}(\bm{x}_\ell)-y_\ell|^2 \leq \max_\ell |h(\bm{x}_\ell)-y_\ell| \\ 
    &\leq (|\text{Tr}[\rho(\bm{x}_{\ell^*})O({\bm{\alpha}})]-h(\bm{x}_{\ell^*})|\;+\\
    &\;\;\;\;\;\;|\text{Tr}[\rho(\bm{x}_{\ell^*})O({\bm{\alpha}})]-y_\ell|)^2 \notag\\
    &\leq \left(\left( \sum_i |{\bm{\alpha}}_i|\right)\epsilon_1 + \epsilon_2\right)^2
    \\
    &\leq (B\epsilon_1+\epsilon_2)^2
\end{align}
In practice, we can require to the LASSO algorithm to obtain a $w^*$ which achieves a training error at most $\epsilon_3/2$ larger than the optimal one. 
We can obtain such precision by setting $B=\mathrm{poly}(n)$. Formally, we have 
\begin{equation}
    \hat{R}(h)\leq \frac{\epsilon_3}{2}+\min_{\substack{w\in\mathbb{R}^m\\ ||w||_1\leq B}} 
    \frac{1}{N}\sum_{\ell=1}^N |w\cdot\phi(\bm{x}_\ell)-\text{Tr}[\rho(\bm{x}_\ell)O({\bm{\alpha}})]|^2
\end{equation}
Because we have set $B=\mathrm{poly}(n)$, we have that the second term must be at most $\hat{R}(h_{opt})$ and therefore we have
\begin{equation}
    \hat{R}(h)\leq(B\epsilon_1+\epsilon_2)^2+\frac{\epsilon_3}{2} 
\end{equation}
We note that as $B\leq\mathcal{O}(\mathrm{poly}(n))$ we can bound the error $\epsilon'_1=B\epsilon_1$ to scale polynomially to zero by just reducing the sampling error $\epsilon_1$ using polynomially many more copies of each $\rho(\bm{x}_l)$.
Thus we can rewrite equation (\ref{gen_error}) as
\begin{equation}
    R(h)\leq(\epsilon'_1+\epsilon_2)^2+\frac{\epsilon_3}{2}+2BM\sqrt{\frac{2\log(2m)}{N}}+M^2\sqrt{\frac{\log(\delta)^{-1}}{2N}}
\end{equation}
Then, in order to bound the prediction error above by $\epsilon=(\epsilon'_1+\epsilon_2)^2+\epsilon_3$ we need to choose an N such that
\begin{equation}\label{eq:eps_3/2}
    2BM\sqrt{\frac{2\log(2m)}{N}}+M^2\sqrt{\frac{\log(\delta)^{-1}}{2N}} \leq \frac{\epsilon_3}{2}
\end{equation}
By substituting $M=B$ we obtain that:
\begin{align}
    &2BM\sqrt{\frac{2\log(2m)}{N}}+M^2\sqrt{\frac{\log(\delta)^{-1}}{2N}} \notag\\&\leq \left(\frac{B^2}{\sqrt{2N}}(4\sqrt{\log(2\mathcal{O}(\text{poly}(n)))}+\sqrt{\log(\delta^{-1})} \right)
\end{align}
so that it is upper bounded by $\epsilon_3/2$ choosing $N$ as
\begin{align}
    N&=2\frac{B^4\sqrt{2\log(\mathcal{O}(\text{poly}(n))/\delta)}}{(\epsilon_3)^2}\\
    &\sim\frac{\log(\text{poly}(n)/\delta)\text{poly}(n)}{(\epsilon_3)^2}
\end{align}

By setting $\epsilon'_1=0.2\epsilon$, $\epsilon_2=\epsilon$, and $\epsilon_3=0.4\epsilon$ we have $(\epsilon'_1+\epsilon_2)^2+\epsilon_3\leq\epsilon$ and recover the claim of the Theorem.

Finally we bound the efficiency of our algorithm. The training time is dominated by the creation of the feature map $\phi(\bm{x}_\ell)$ for each training point and by the LASSO regression over the corresponding feature space. To create the vector $\phi(\bm{x})$ the quantum algorithm needs to prepare multiple copies of $\rho(\bm{x}_\ell)$ $\forall \bm{x}_\ell \in T_N^{\bm{\alpha}}$. As seen before, only a polynomial number of copies are sufficient to achieve a desired error $\epsilon'_1$, so that the whole process takes time $\mathcal{O}(\mathrm{poly}(n)N)$.
For the LASSO regression, it is known that to obtain a training error at most $\epsilon_3/2$ larger than the optimal function value, the LASSO algorithm on the feature space of $\phi(\bm{x})$ can be executed in time $\mathcal{O}\left(\frac{m_\phi\log m_\phi}{\epsilon^2_3}\right)$ \cite{hazan2012linear}, where in our case $m_\phi=m$. It is easy to show that even this time is bounded by $\mathcal{O}(\mathrm{poly}(n)N)$. 
The prediction time corresponds to the time for the evaluation of the learned model $h(\bm{x})=w^*\cdot\phi(\bm{x})$ which takes time $\mathcal{O}(m)\sim\mathcal{O}(\mathrm{poly}(n))$.
In conclusion, the overall time of the quantum algorithm is bounded by $\mathcal{O}(\mathrm{poly}(n)N)$

\end{proof}

\section{Construction of the Kitaev's circuit-to-Hamiltonian for the ground state }
To make our work more self-contained, we report the construction of the Kitaev's circuit-to-Hamiltonian which was used in the proof of Theorem~\ref{thm: separation gs}. We follow the constructions outlined in~\cite{kitaev2002classical,kempe20033}. Let $x\in\mathcal{L}$ an input of a language $\calL\in\BQP$ and $U(x)=U_T...U_1$ be a quantum circuit of polynomial size $T$ acting on $n=\mathrm{poly}(|x|)$ qubits which decides $x$. We first add $2T$ layers of identities to $U(x)$.
The Hamiltonian \( H \) that is constructed operates on a space of \( n = N + \log(3T + 1) \) qubits. The first \( N \) qubits represent the computation, and the last \( \log(3T + 1) \) qubits represent the possible values \( 0, \dots, 3T \) for the clock. The Hamiltonian is constructed from three terms:

\begin{align}
\label{eq:kitaev}
H(x) = H_{\mathrm{init}} + H_{\mathrm{clock}} + \sum_{t = 1}^{3T} H_t(x),
\end{align}
with
\begin{align}
\centering
H_{\mathrm{init}} &=  \sum_{i = 1}^N \ket{0}\bra{0}_i,\quad
H_{\mathrm{clock}} =  \sum_{t = 1}^{3T -1}\ket{01}\bra{01}^{\mathrm{clock}}_{t, t+1},\\
H_t(x) = \frac{1}{2}\Big(&I \otimes \ket{100}\bra{100}^{\mathrm{clock}}_{t-1, t, t+1} \\ 
       &+ I \otimes \ket{110}\bra{110}^{\mathrm{clock}}_{t-1, t, t+1} \\
&- U_t(x) \otimes \ket{110}\bra{100}^{\mathrm{clock}}_{t-1, t, t+1} \\
&- U_t(x)^\dagger \otimes \ket{100}\bra{110}^{\mathrm{clock}}_{t-1, t, t+1}\Big)
\end{align}
where $\ket{.}\bra{.}_i$ acts on the $i$th site of $\mathbb{C}^{2^N}$, $\ket{.}\bra{.}_j^{\mathrm{clock}}$ acts on the $j$th site of $\mathbb{C}^{2^{3T}}$ and $U_t$ denotes the $t$th layer of gates in $U(x)$. 
Note that $H(x)$ is 5-local for all $x $.
The ground state of $H(x)$ is given by $\rho(x) = \ket{\psi(x)}\bra{\psi(x)}$, where
\begin{align}
\ket{\psi(x)} = \frac{1}{\sqrt{3T}} \sum_{t = 1}^{3T}(U_t\cdots U_1)(x)\ket{0^N}\ket{1^t0^{3T - t}},
\end{align}
Measuring the local observable $O=Z\otimes I\otimes \ket{1}\bra{1}_T^{\mathrm{clock}}$ on the state $\ket{\psi(x)}$ will decide the input $x$, analogously to the case in the proof of Theorem~\ref{lemma: cl hardness}.

\section{Proof of Theorem~\ref{thm: separation W} and Corollary~\ref{lemma: learning unitary}}\label{app:unitaries}

We rewrite here Theorem~\ref{thm: separation W} form the main text, formalizing the result and providing a detailed proof.

\begin{theorem}[Theorem~\ref{thm: separation W} in the main text]
Assuming that $\BQP\not\subseteq\mathsf{P\slash poly}$, the following holds. Let $S=\{\ket{\psi_\ell}\}_\ell$ and $Q=\{Q_m\}_m$ be a discrete set of quantum states and measurement operators, respectively, both of which are polynomially describable. Then for every efficient (non-adaptive) algorithm $\mathcal{A}_W$ which learns an arbitrary unitary $W$ from query access to it on states from $S$ and measured by operators in $Q$ there exists a set of distributions $\{\mathcal{D}_i\}_i$ over $\{0,1\}^n$, a family of unitaries $\{U(\bm{x})\}_{\bm{x}}$ and a measurement $O$ for which the concept class $\mathcal{M}_{U,W,O}$ exhibits a learning advantage under the learning condition of Def.~\ref{def:learning}.    
\end{theorem}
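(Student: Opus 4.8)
The plan is to realize the concept class $\mathcal{M}_{U,W,O}$ so that a single PAC training set, drawn from a carefully engineered input distribution, simultaneously (i) contains a classically intractable sub-problem that forces a learning advantage and (ii) supplies exactly the data that the given unitary-learning routine $\mathcal{A}_W$ consumes. The starting point is the branching circuit $U(\bm{x})$ of Figure~\ref{fig:unitary separation}, controlled by the first input bit $x_1$: for $x_1=0$ it prepares, on the $n_S$-qubit register, the probe state $\ket{\psi_{\bm{x_S}}}\in S$ indexed by the suffix bits $\bm{x_S}$, while for $x_1=1$ it runs a $\BQP$-complete computation on $\bm{x_S}$. A second controlled unitary $V_A$, driven by $\bm{x_Q}$, is chosen so that on the $x_1=0$ branch measuring the fixed operator $O$ is equivalent to measuring $Q_{\bm{x_Q}}\in Q$ on the output of $W(\bm{\alpha})$, whereas on the $x_1=1$ branch $V_A$ acts trivially and $O$ reduces to a single-qubit $Z$ measurement. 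If $S$ or $Q$ are continuous, I would first discretize them with controllable error so that each element carries a finite label $\bm{x_S}$, $\bm{x_Q}$.

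Next I would fix the family of distributions $\{\mathcal{D}_i\}_i$: the bit $x_1$ is uniform; conditioned on $x_1=0$ the remaining bits reproduce the (possibly joint) distribution over probe states and measurements that $\mathcal{A}_W$ requires to learn $W(\bm{\alpha})$, which is obtainable, as noted in the proof sketch, by coherent post-processing of uniformly random bits; and conditioned on $x_1=1$ the bits $\bm{x_S}$ follow an arbitrary distribution whose choice indexes $i$. For classical hardness I would then specialize to the concept $\bm{\alpha}=\bm{0}$ with $W(\bm{0})=I^{\otimes n_S}$, so that on the $x_1=1$ inputs $f^{\bm{0}}$ is precisely the $\BQP$-deciding map of Theorem~\ref{lemma: cl hardness}. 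A classical learner meeting the condition of Def.~\ref{def:learning} on the full distribution would, since $x_1=1$ carries weight $1/2$, also meet it (up to a constant factor) on the conditional $x_1=1$ distribution, which ranges over \emph{all} distributions on $\bm{x_S}$; invoking Theorem~\ref{lemma: cl hardness} through the learnability-implies-evaluation argument then yields a polynomial-size circuit deciding a $\BQP$-complete language, contradicting $\BQP\not\subseteq\mathsf{P\slash poly}$.

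For quantum learnability I would feed the $x_1=0$ subset of the training data into $\mathcal{A}_W$. The crucial observation is that because $\mathcal{A}_W$ is \emph{non-adaptive}, its queries need not be chosen interactively and can instead be fulfilled by i.i.d.\ samples drawn from the conditional distribution; these samples recover a description of $W(\bm{\alpha})$. Having learned $W(\bm{\alpha})$, the quantum algorithm can construct $O(\bm{\alpha})=W(\bm{\alpha})OW^\dagger(\bm{\alpha})$, prepare $\rho_U(\bm{x})$ for any input $\bm{x}$, and estimate $\Tr[\rho_U(\bm{x})O(\bm{\alpha})]$, thereby satisfying Def.~\ref{def:learning} across the whole distribution.

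The main obstacle I anticipate is the faithful embedding of $\mathcal{A}_W$'s query model into the PAC data stream. I must argue that sampling $\bm{x}$ from $\mathcal{D}_i$ and executing $U(\bm{x})$ and $V_A$ reproduces, up to controllable discretization and finite-sampling error, exactly the estimates of $\bra{\psi_{\bm{x_S}}}W^\dagger(\bm{\alpha})Q_{\bm{x_Q}}W(\bm{\alpha})\ket{\psi_{\bm{x_S}}}$ that $\mathcal{A}_W$ expects as answers, and that the label noise $\epsilon_2$ permitted by Def.~\ref{def:learning} can be matched to the error tolerance of $\mathcal{A}_W$. Tying together the three accuracy budgets, namely the discretization of $S$ and $Q$, the sampling error on the labels, and the output guarantee of $\mathcal{A}_W$, is the bookkeeping step that requires the most care; the conceptual core, namely the two-branch construction and the role of non-adaptivity, is otherwise straightforward.
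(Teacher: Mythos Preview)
Your proposal is correct and follows essentially the same construction as the paper: the two-branch circuit controlled by $x_1$, the controlled unitary $V_A$ implementing the measurements $Q_{\bm{x_Q}}$, the family of distributions with $x_1$ uniform and the $x_1=1$ branch indexed by $i$, classical hardness via the concept $\bm{\alpha}=\bm{0}$ reducing to Theorem~\ref{lemma: cl hardness}, and quantum learnability by feeding the $x_1=0$ samples into $\mathcal{A}_W$. Your explicit tracking of the three error budgets (discretization, label noise $\epsilon_2$, and $\mathcal{A}_W$'s output guarantee) is in fact more careful than the paper's own proof, which largely leaves this bookkeeping implicit.
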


\begin{proof}

We can think at the algorithm $\mathcal{A}_{W}$ as an algorithm which receives in input pairs of $\{(\ket{\psi_\ell},y^{m_\ell}_\ell)_\ell\}$, where $\ket{\psi_\ell}\in S$ and $y^{m_\ell}_\ell\in\mathbb{R}$ is the measurement outcome of a randomly selected operator $Q_{m_l}\in Q$ on $\ket{\psi_\ell}$, and for any $\epsilon\geq 0 $ outputs a matrix $\Tilde{W}$ such that $||\Tilde{W}-W||\leq\epsilon$ in some norm distance. The idea to construct a $\mathcal{M}_{U,W,O}$ which exhibits a separation is the following, illustrated also in Figure \ref{fig:unitary separation}. Consider the family of quantum states $\{\ket{\psi_U(\bm{x})}\}_{\bm{x}}$ constructed by parameterized unitaries $\{U(\bm{x})\}_{\bm{x}}$ in the following way, depending on the input $\bm{x}\in\{0,1\}^n$:

\begin{itemize}

\item The first $1+n_Q$ qubit of $\ket{\psi_U(\bm{x})}$ are initializated in the state $\ket{x_1}\otimes\ket{x_2x_3...x_{n_Q+1}}$, the remaining $n_S=n-(1+n_Q)$ qubit $\bm{x_{S}}=x_{n-n_S}x_{n-(n_S-1)}...x_n$ are in the state $\ket{0^{\otimes n_S}}$.

\item If the first bit $x_1$ is 0, then $U(\bm{x})$ prepares on the $n_S$ qubits register the state $\ket{\psi_{\bm{x_S}}}$ from the set $S$ described by the bitstring $\bm{x_S}\in\{0,1\}^{n_S}$. As we assumed that every state in $S$ allows a polynomial classical description, we only need an input $\bm{x}\in\{0,1\}^n$ of polyinomial size. We thus have:
\begin{equation}
        U(\bm{x})\left(\ket{0}\otimes\ket{\bm{x_{Q}}}\otimes\ket{0^{\otimes n_S}}\right)= \ket{0}\otimes\ket{\bm{x_{Q}}}\otimes\ket{\psi_{stab}(\bm{x_S})}
\end{equation}

\item If the first bit $x_1$ is 1, then on the $n_S$ qubit register $U(\bm{x})$ prepares the state $\ket{\psi_{BQP}(\bm{x_S})}$ such that $\bra{\psi_{BQP}(\bm{x_S})}Z\otimes I \otimes ... \otimes I\ket{\psi_{BQP}(\bm{x_S})}$ outputs $+1/3$ if the $n_S$ bitstring $\bm{x_S}$ belongs to an arbitrary (previously fixed) $\BQP$complete language $\mathcal{L}$ defined over input $\bm{x_S}\in\{0,1\}^{n_S}$, while it outputs $-1/3$ otherwise. Following the same arguments used in the proof of Theorem~\ref{lemma: cl hardness}, for any $\mathcal{L}\in \BQP$ it always exists such $U(\bm{x})$. We thus have:
\begin{equation}
        U(\bm{x})\left(\ket{1}\otimes\ket{\bm{x_{Q}}}\otimes\ket{0^{\otimes n_S}}\right)= \ket{1}\otimes\ket{\bm{x_{Q}}}\otimes\ket{\psi_{BQP}(\bm{x_{S}})}
\end{equation}

\end{itemize}
Regarding the unknown observable $O({\bm{\alpha}})$, we will consider a measurement operator of the kind Eq.(\ref{eq: concept unitary}). Specifically we define a controlled operator $V_A$, controlled by the $1+n_Q$ qubit register $\ket{x_1x_2x_3...x_{n_Q+1}}$, such that $V_A=\sum_{0\bm{x_Q}}\ket{0\bm{x_Q}}\bra{0\bm{x_Q}}\otimes V(\bm{x_Q})$ with $\bm{x_Q}=x_2x_3...x_{n_Q+1}$ if $x_1=0$ and $V_A=\sum_{1\bm{x_Q}}\ket{1\bm{x_Q}}\bra{1\bm{x_Q}}\otimes I^{\otimes n_S}$ if $x_1=1$.
The unitary matrices $V(\bm{x_Q})$ are defined such that they rotate the measurement $O$ into the observable $Q_{\bm{x_Q}}\in Q$ described by the bitstring $\bm{x_Q}=x_2x_3...x_{n_Q+1}$, i.e. $Q_{\bm{x_Q}}=V(\bm{x_Q})OV^\dagger(\bm{x_Q})$ where $O$ could be taken as the local observable $Z_1=Z\otimes I^{\otimes n_S-1}$ on the $n_S$ qubit register.
The final observable $O(\bm{\alpha})$ will then be $O(\bm{\alpha})=(I^{\otimes (1+n_Q)}\otimes W(\bm{\alpha})\; V_A)O (V_A \;I^{\otimes (1+n_Q)}\otimes W(\bm{\alpha}) )^\dagger$, with $W({\bm{\alpha}})$ an arbitrary unitary parameterized by ${\bm{\alpha}}$.  Furthermore, we define $W( {\bm{\alpha}}=\Vec{0})=I\otimes ... \otimes I$. We can now construct a set of distributions $\{\mathcal{D}_i\}_i$ such that the concept class $\mathcal{M}_{U,W,O}$ exhibits a learning separation. For Theorem \ref{lemma: cl hardness}, under the learning condition in Def.~\ref{def:learning}, the concept $f^0=\text{Tr}[\rho_U(\mathbf{x})V_AOV^\dagger_A]$ cannot be efficiently learned by a classical algorithm when $U(\bm{x})$ is the circuit which decides the $\BQP$-complete language $\mathcal{L}$ over the $n_S$-sized bitstrings $\bm{x_S}$. We now define the following set distributions $\{\mathcal{D}_i\}_i$ on the input bitsrings $\bm{x}\in\{0,1\}^n$: 
\begin{itemize}
    \item The first bit $x_1$ of $\bm{x}$ is extracted randomly with equal probability between 0 or 1.
     \item If $x_1=0$ then the other $n-1$ bit $x_2x_3...x_n$ are extracted from the required distribution by $\mathcal{A}_W$ to learn the unitary $W(\bm{\alpha})$.
    \item If $x_1=1$ then the following $n_Q$ bit $\bm{x_Q}$ are extracted uniformly at random while the $n_S$ bit $\bm{x_S}$ are sampled with an arbitrary input distribution over $\bm{x_S}\in\{0,1\}^{n_S}$ which specifies the overall distribution $\calD_i$ for each $i$.
    
\end{itemize}

\textbf{Classical hardness} The classical hardness of the learning task comes directly from the proof of Theorem~\ref{lemma: cl hardness}. Consider the concept $f^0\in \mathcal{M}_{U,W,M}$ defined by $W(\bm{0})=I^{\otimes n_S}$. For the same reasoning of the proof of Theorem~\ref{lemma: cl hardness} there cannot exists a polynomial sized classical circuit which evaluates $f^0(\bm{x})$. Since for any of the input distributions $\calD_i$, $x_1$ is equally sampled between 0 and 1, no classical algorithm can meet the learning condition of Eq. \ref{def:learning} on half of the input bitstrings, thus it can not learn the concept $f^0\in\mathcal{M}_{U,W,O}$ in polynomial time for every $\epsilon$. As the learning algorithm must suceed for every $\bm{\alpha}$, this suffices to prove the classical hardness of the learning task.

\vspace{4mm}
\textbf{Quantum learnability} Recall that the data the learning algorithm receives for a concept $f^{\bm{\alpha}}\in \mathcal{M}_{U,W,M}$ are $\mathcal{T}^{\bm{\alpha}}=\{\bm{x}_\ell,y_\ell \}_\ell$ with $\mathbb{E}[y_\ell]=\text{Tr}[\rho_{U}(\bm{x}_\ell)O({\bm{\alpha}})]$ and $\rho_U(\bm{x}_\ell)=\ket{\psi_U(\bm{x})}\bra{\psi_U(\bm{x})}$. Now, in the case the first bit of $\bm{x}_\ell$ is 0, $\mathcal{T}^{\bm{\alpha}}$ are exaclty the pairs $\{(\ket{\psi_\ell},y^m_\ell)\}_\ell$ required by the algorithm $\mathcal{A}_W$ to learn $W(\bm{\alpha})$ and thus $O(\bm{\alpha})$.  
As $x_1=0$ for half of the training samples in $\mathcal{T}^{\bm{\alpha}}$, the quantum algorithm is able to learn $O(\bm{\alpha})$ and evaluate it on any input state.  

\end{proof}

\vspace{5mm}

We now provide here the the rigorous proof of Corollary \ref{lemma: learning unitary}. The proof follows the same steps of the one for Theorem~\ref{thm: separation W} above while concretizing the result for shallows $W$.

First, let us repeat the result in \cite{huang2024learning} for learning shallow unitaries. Be stab1 the family of single qubit stabilizer states $\mathrm{stab1}=\{\ket{0}, \ket{1}, \ket{+}, \ket{-}, \ket{y+}, \ket{y-}\}$, then:

\begin{lemma}[Lemma 10 in \cite{huang2024learning}, Learning a few-body observable with an unknown support]\label{lemma: huang shallow}
Given an error $\epsilon$, failure probability $\delta$, an unknown $n$-qubit observable $O$ with $\|O\|_{\infty} \leq 1$ that acts on an unknown set of $k$ qubits, and a dataset $\mathcal{T}_O(N) = \{|\psi_{\ell}\rangle = \bigotimes_{i=1}^{n} |\psi_{\ell,i}\rangle, v_{\ell}\}_{\ell=1}^{N}$, where $|\psi_{\ell,i}\rangle$ is sampled uniformly from \textup{stab1} and $v_{\ell}$ is a random variable with $\text{E}[v_{\ell}] = \langle\psi_{\ell}| O |\psi_{\ell}\rangle$, $|v_{\ell}| = \mathcal{O}(1)$. Given a dataset size of 
\begin{equation}
N = \frac{2^{\mathcal{O}(k) }\log(n/\delta)}{\epsilon^2},
\end{equation} 
with probability at least $1-\delta$, we can learn an observable $O'$ such that $\|O' - O\|_{\infty} \leq \epsilon$ and $\textup{supp}(O') \subseteq \textup{supp}(O)$. The computational complexity is $\mathcal{O}(n^k \log(n/\delta)/\epsilon^2)$.
\end{lemma}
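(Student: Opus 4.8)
The plan is to reduce learning the unknown $k$-body observable $O$ to estimating the coefficients of its Pauli expansion $O=\sum_P c_P\,P$, where $P$ ranges over $n$-qubit Pauli strings of weight at most $k$ and $c_P = 2^{-n}\Tr[OP]$. The structural facts I would exploit are twofold. First, since each training state is a \emph{product} of single-qubit stabilizer states $\ket{\psi_\ell}=\bigotimes_i\ket{\psi_{\ell,i}}$, we have $\bra{\psi_\ell}P\ket{\psi_\ell}=\prod_i\bra{\psi_{\ell,i}}P_i\ket{\psi_{\ell,i}}$, and each one-qubit factor is exactly computable classically. Second, because $O$ acts nontrivially only on an unknown set of $k$ qubits, $c_P=0$ whenever $\mathrm{supp}(P)\not\subseteq\mathrm{supp}(O)$, so $O$ has at most $4^k$ nonzero Pauli coefficients. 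The argument then splits into: (a) building an unbiased, low-variance single-shot estimator for each $c_P$; (b) identifying the nonzero coefficients by thresholding; and (c) bounding the operator-norm error together with the sample and time cost.

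For step (a) I would use the single-shot estimator $\hat c_P^{(\ell)} = v_\ell\prod_i f\big(P_i,\ket{\psi_{\ell,i}}\big)$, with $f(I,\cdot)=1$ and $f(W,\ket{\phi})=3\bra{\phi}W\ket{\phi}$ for $W\in\{X,Y,Z\}$. Using that the six states of \textup{stab1} are the $\pm1$ eigenstates of $X,Y,Z$, a direct one-qubit computation gives $\mathbb{E}_{\ket{\phi}}[\bra{\phi}Q\ket{\phi}\,f(P,\ket{\phi})]=\delta_{P Q}$ for all single-qubit Paulis $P,Q$ (the factor $3$ compensates for each Pauli being measured with probability $1/3$ over \textup{stab1}). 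By the product structure and the assumption $\mathbb{E}[v_\ell\mid\psi_\ell]=\bra{\psi_\ell}O\ket{\psi_\ell}$, this yields $\mathbb{E}[\hat c_P^{(\ell)}]=c_P$. Since $|v_\ell|=\mathcal{O}(1)$ and each of the at most $k$ nontrivial factors is bounded by $3$, we get $|\hat c_P^{(\ell)}|\le \mathcal{O}(3^k)$, hence a variance bound $2^{\mathcal{O}(k)}$. Averaging $\hat c_P := \tfrac1N\sum_\ell \hat c_P^{(\ell)}$ and applying Hoeffding's inequality shows $|\hat c_P - c_P|\le\eta$ with probability $1-\delta'$ once $N=\mathcal{O}\big(9^k\log(1/\delta')/\eta^2\big)$.

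For steps (b)--(c) I would estimate $\hat c_P$ for \emph{all} $\mathcal{O}(n^k\,3^k)$ weight-$\le k$ Pauli strings from the same dataset, take a union bound with $\delta'=\delta/(\#\text{Paulis})$, and set $O'=\sum_{P:\,|\hat c_P|>2\eta}\hat c_P\,P$. Thresholding handles accuracy and support simultaneously: any surviving $P$ has $|c_P|>\eta>0$, hence $\mathrm{supp}(P)\subseteq\mathrm{supp}(O)$, giving $\mathrm{supp}(O')\subseteq\mathrm{supp}(O)$ automatically; and at most $4^k$ strings survive, since only supports inside $\mathrm{supp}(O)$ can have $c_P\neq0$. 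Splitting $\|O'-O\|_\infty\le\sum_P|\hat c_P-c_P|$ into surviving terms (error $\le\eta$ each) and discarded-but-nonzero terms (each with $|c_P|\le3\eta$), with at most $4^k$ of each kind, gives $\|O'-O\|_\infty\le 4\cdot 4^k\eta$. Choosing $\eta=\epsilon\,2^{-\mathcal{O}(k)}$ makes this $\le\epsilon$; substituting into the bound for $N$ and using $\log(\#\text{Paulis}/\delta)=\mathcal{O}(k\log(n/\delta))$ collapses everything to $N=2^{\mathcal{O}(k)}\log(n/\delta)/\epsilon^2$, while iterating over all weight-$\le k$ Paulis with $\mathcal{O}(k)$ work per sample yields the runtime $\mathcal{O}(n^k\log(n/\delta)/\epsilon^2)$.

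The main obstacle, and the step deserving the most care, is reconciling the \emph{unknown} support with an $n$-independent precision budget. A naive bound $\|O'-O\|_\infty\le\sum_P|\hat c_P-c_P|$ over all $\mathcal{O}(n^k)$ estimated coefficients would demand per-coefficient precision of order $\epsilon/n^k$ and destroy the claimed sample complexity. The resolution, which must be verified rigorously, is that the thresholding guarantees at most $4^k$ coefficients---all inside the true support---can contribute to the final error, so the effective number of relevant terms is $2^{\mathcal{O}(k)}$ rather than $\mathrm{poly}(n)$, and the $n$-dependence enters only logarithmically through the union bound. I would additionally confirm that the one-sided containment $\mathrm{supp}(O')\subseteq\mathrm{supp}(O)$ is robust, i.e.\ that with high probability no spurious Pauli (with $c_P=0$) crosses the threshold; this again follows from the union bound over the zero coefficients, since each such estimate deviates by more than $2\eta>\eta$ only with probability $\le\delta'$.
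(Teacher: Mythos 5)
This lemma is imported verbatim from Ref.~\cite{huang2024learning} (their Lemma~10): the present paper states it without proof, so there is no internal proof to compare against, and your blind reconstruction is correct and follows essentially the same route as the cited work. Your estimator $\hat c_P^{(\ell)} = v_\ell\prod_i f\big(P_i,\ket{\psi_{\ell,i}}\big)$ is the standard single-qubit classical-shadow inversion (the factor $3$ compensating the probability $1/3$ of each nontrivial Pauli under \textup{stab1}, with $\mathbb{E}\big[\langle\phi|Q|\phi\rangle f(P,\ket{\phi})\big]=\delta_{PQ}$ checking out case by case), and your thresholding at $2\eta$ correctly delivers both the one-sided support containment $\mathrm{supp}(O')\subseteq\mathrm{supp}(O)$ and the crucial reduction of the error budget from $\mathcal{O}(n^k)$ estimated coefficients to the $2^{\mathcal{O}(k)}$ that can actually contribute, which is precisely what yields the claimed $N=2^{\mathcal{O}(k)}\log(n/\delta)/\epsilon^2$ with only logarithmic $n$-dependence.
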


The proof of Corollary~\ref{lemma: learning unitary} then goes by explicitly constructing a family of parametrized circuit $\{U(\bm{x})\}_{\bm{x}}$ and a set of input distributions $\{\mathcal{D}_i\}_i$ such that the concept class $\mathcal{M}_{U,W,M}$ is quantum learnable using Lemma~\ref{lemma: huang shallow} while still being classically hard. Let us rewrite here Corollary~\ref{lemma: learning unitary}

\shallowadv*
\begin{proof}
    Let first define the set of unitaries $\{U(\bm{x})\}_{\bm{x}}$, $\{W(\bm{\alpha})\}_{\bm{\alpha}}$, measurement operator $O$ and input distribution $\mathcal{D}$ on which the separation result holds. We provided a graphical representation of it in Figure \ref{fig:huang separation}.
    Be $n_S= \left \lfloor{\frac{n}{3}}\right \rfloor$, we define the set $\{U(\bm{x})\}_{\bm{x}}$ as the set of unitaries which act on a $n_S+1$ qubit system in the following way, depending on the first bit $x_1$ of the input $\bm{x}$. 
    
    If $x_1=0$ then each $U(\bm{x})$ is defined such that
    \begin{equation}
        U(\bm{x})\left(\ket{x_1}\otimes\ket{0^{n_S}}\right)= \ket{x_1}\otimes\ket{\psi_{stab}(\bm{x})}
    \end{equation}
    where $\ket{\psi_{stab}(\bm{x})}=\bigotimes_{i=1}^{n_S}\ket{\psi^i_{stab1}(\bm{x})}$ is a $n_S$ qubit tensor product of single qubit stabilizer states $\ket{\psi^i_{stab1}(\bm{x})}\in\{\ket{0},\ket{1},\ket{+},\ket{-},\ket{y+},\ket{y-} \}$. Since $n_S= \left \lfloor{\frac{n}{3}}\right \rfloor$, the output state $\ket{\psi_{stab}(\bm{x})}$ is completely described by the remaining $n-1$ input bits $x_2x_3...x_n$.
    
    Let consider now the case when $x_1=1$. Be $\mathcal{L}$ a $\BQP$-complete language defined over the input $\Tilde{\bm{x}}\in\{0,1\}^{n_S}$.
    Then we define the set of $\{U(\bm{x})\}_{\bm{x}}$ in the following way:
    \begin{equation}
        U(\bm{x})\left(\ket{x_1}\otimes\ket{0^{n_S}}\right)= \ket{x_1}\otimes\ket{\psi_{BQP}(\bm{x})}
    \end{equation}
    where $\ket{\psi_{BQP}(\bm{x})}$ is such that $\bra{\psi_{BQP}(\bm{x})}Z\otimes I \otimes ... \otimes I \ket{\psi_{BQP}(\bm{x})}=1/3$ if $\Tilde{\bm{x}}\in \mathcal{L}$ and $-1/3$ if $\Tilde{\bm{x}}\not \in \mathcal{L}$. As we described in the proof of Theorem~\ref{lemma: cl hardness}, such a circuit $U_{BQP}(\bm{x})$ always exists for any $\BQP$ language.

    As a set of unitaries $\{W(\bm{\alpha})\}_{\bm{\alpha}}$ we consider any set of parametrized unitaries acting on the $n_S$ qubit register such that for each $\bm{\alpha}$ parameter $W(\bm{\alpha})$ is shallow and such that $W(\bm{0})=I^{\otimes n_S}$. The $n_S$ qubit register is then measured by the observable $M=Z\otimes I \otimes ... \otimes I$.
    Finally we define the set of distributions $\{\mathcal{D}_i\}_i$ on the input bitstrings $\bm{x}\in\{0,1\}^n$ to be the following:
    \begin{itemize}
    \item The first bit $x_1\in \{0,1\}$ of $\bm{x}$ is extracted randomly with equal probability.
    \item If $x_1=0$ then the other $n-1$ bit $x_2x_3...x_n$ are extracted following the uniform distribution over $\{0,1\}^{n-1}$.
    \item If $x_1=1$ then the following $n_S$ bits $x_2x_3...x_{n_S+1}$ are extracted from an arbitrary distribution which characterizes $\calD_i$ for each index $i$. The other $n-(n_S+1)$ input bits are sampled uniformly at random. 
     
\end{itemize}
We now show that the concept class $\mathcal{M}_{U,W,M}$ defined in Eq. \ref{eq: concept unitary} with $\{U(\bm{x})\}_{\bm{x}}$, $\{W(\bm{\alpha})\}_{\bm{\alpha}}$ and $M$ considered above exhibits a learning separation with respect to the learning condition in Def.~\ref{def:learning}.
\vspace{4mm}

\textbf{Classical hardness} The argument is exactly the same as the one presented before in the proof of Theorem \ref{thm: separation W}.

\vspace{4mm}
\textbf{Quantum learnability} The quantum learnability is guaranteed by Lemma~\ref{lemma: huang shallow} in~\cite{huang2024learning}. Recall that the training data the learning algorithm receives for a concept $f^{\bm{\alpha}}\in \mathcal{M}_{U,W,M}$ are $\mathcal{T}^{\bm{\alpha}}=\{\bm{x}_\ell,y_\ell \}_\ell$ with $\mathbb{E}[y_\ell]=\text{Tr}[\rho_{U}(\bm{x}_\ell)O({\bm{\alpha}})]$. Now, in the case the first bit of $\bm{x}_\ell$ is 0, $\mathcal{T}^{\bm{\alpha}}$ is exactly the training set $\mathcal{T}_{\bm{\alpha}}$ required by the algorithm in Lemma \ref{lemma: huang shallow} to learn $O(\bm{\alpha})$. Since for every $\bm{\alpha}$ the unitary $W(\bm{\alpha})$ is of shallow depth, the locality of $O(\bm{\alpha})$ scales logarithmic with the number of qubit $n_S$.   
Then Lemma \ref{lemma: huang shallow} guarantees that the learning algorithm runs in polynomial time requiring a polynomial-sized dataset, a condition met as half of the training samples in $\mathcal{T}^{\bm{\alpha}}$ will suffice.

\end{proof}

\begin{figure}[h!]
    \includegraphics[width=\linewidth]{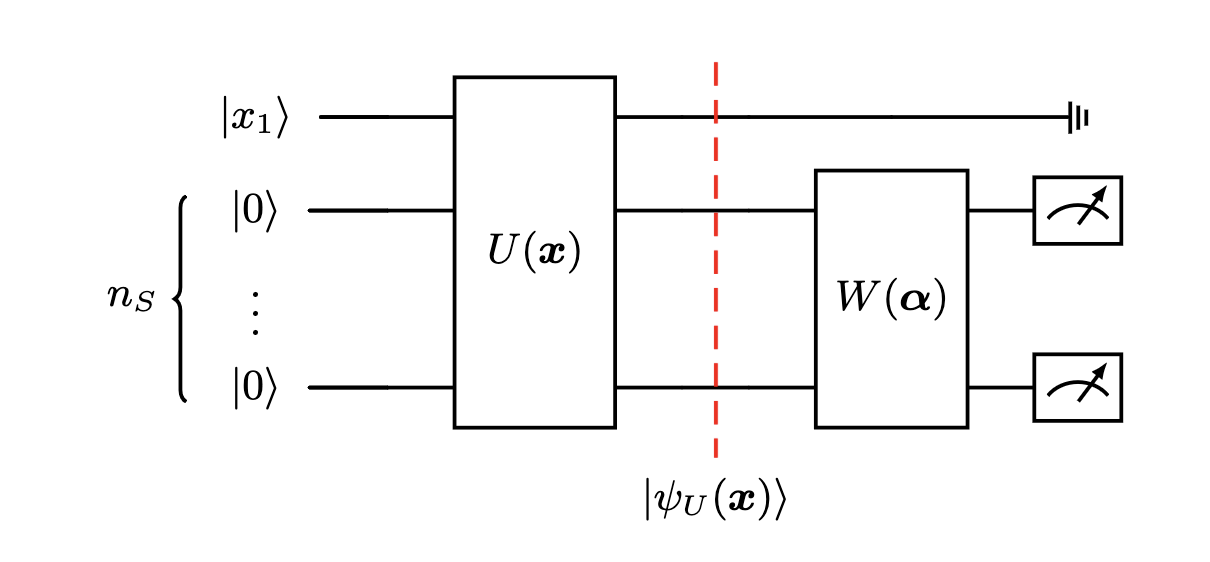}

    \caption{Quantum model which exhibits a learning separation for the concept class $\mathcal{M}_{U,W,O}$ when $W$ is a shallow circuit. The unitary $U(\bm{x})$ prepares the state $\ket{\psi_U(\bm{x})}=\ket{x_1}\otimes\ket{\psi^{n_S}_U(\bm{x_S})}$ where the form of $\ket{\psi^{n_S}_U(\bm{x_S})}$ depends on $x_1$, the first bit of each input $\bm{x}\in\{0,1\}^n$. If $x_1=0$, then $\ket{\psi^{n_S}_U(\bm{x_S})}=\ket{\psi_{stab}(\bm{x_S})}=\bigotimes_{i=1}^{n_S}\ket{\psi^i_{stab1}(\bm{x_S})}$ is the $n_S= \left \lfloor{\frac{n}{3}}\right \rfloor $ qubit tensor product of single-qubit stabilizers described by the classical bitstring $x_2x_3...x_n$. If $x_1=1$, then $\ket{\psi^{n_S}_U(\bm{x_S})}$ is the quantum state which decides the $n_S$ input bits $\bm{x_S}=x_2x_3...x_{n_S+1}$, considered as input of a $\BQP$-complete language $\mathcal{L}$ over the bitstrings $\bm{x_S}\in\{0,1\}^{n_S}$ . $W({\bm{\alpha}})$ is a parametrized shallow unitary, to prove classical hardness it is sufficient to consider $W({\bm{\alpha}}=0)=I^{\otimes n_S}$ and the measurement operator to be $O=Z\otimes I \otimes ... \otimes I$ on the $n_S$ qubits register.}
  \label{fig:huang separation}
\end{figure}

\bibliography{main.bib}

\end{document}